\newtheorem{theorem}{Theorem}
\newtheorem{lemma}{Lemma}
\newtheorem{definition}{Definition}
\newtheorem{example}{Example}
\newcommand{\mbf}[1]{\ensuremath{\mathbf{#1}}}
\newcommand{\bm}[1]{\boldsymbol{#1}}
\newcommand{\set}[1]{\ensuremath{\mathcal{#1}}} 
\newcommand{\bs}[1]{^{\hspace{-0.2mm}\scriptscriptstyle (\hspace{-0.2mm}#1\hspace{-0.2mm})}} \newcommand{\Cu}[1]{C\bs{#1}_{\text{up}}}
\newcommand{\markov}{\textnormal{\mbox{$\multimap\hspace{-0.73ex}-\hspace{-2ex}-$}}}
\newcommand{\indicator}[1]{\mathbbm{1}{\left\{ {#1} \right\} }}
\newcommand{\step}[2]{\stackrel{(\text{#1})}{#2}}
\newcommand{\pchk}{\mathtt{H}}
\title{\huge Multi-Way Relay Networks: Orthogonal Uplink, Source-Channel Separation and Code Design}
\author{Roy Timo, Gottfried Lechner, Lawrence Ong and Sarah J.\ Johnson
\thanks{R. Timo and G. Lechner are with the Institute for Telecommunications Research at the University of South Australia (e-mail: roy.timo@unisa.edu.au; rtimo@princeton.edu; gottfried.lechner@unisa.edu.au).}
\thanks{L. Ong and S. Johnson are with the School of Electrical Engineering and Computer Science, University of Newcastle (e-mail: lawrence.ong@cantab.net; sarah.johnson@newcastle.edu.au).}
\thanks{R. Timo, G. Lechner and S. Johnson are supported by the Australian Research Council Discovery Grant DP120102123.}
\thanks{L. Ong is supported by the Australian Research Council Discovery Early Career Researcher Award DE120100246.}
\thanks{Parts of this paper were presented at the Data Compression Conference, Snowbird, UT, March, 2011.}
}
\begin{document}

\maketitle

\begin{abstract}
We consider a multi-way relay network with an orthogonal uplink and correlated sources, and we characterise reliable communication (in the usual Shannon sense) with a single-letter expression. The characterisation is obtained using a joint source-channel random-coding argument, which is based on a combination of Wyner \emph{et al.'s}  \emph{Cascaded Slepian-Wolf Source Coding} and Tuncel's \emph{Slepian-Wolf Coding over Broadcast Channels.} We prove a separation theorem for the special case of two nodes; that is, we show that a modular code architecture with separate source and channel coding functions is (asymptotically) optimal. Finally, we propose a practical coding scheme based on low-density parity-check codes, and we analyse its performance using multi-edge density evolution.
\end{abstract}



\section{Introduction}

\IEEEPARstart{C}{onsider} a multi-way relay network in which a group of physically separated nodes exchange data. Direct communication between the nodes is not permitted, and the exchange is only made possible with the help of a relay. The nodes encode and transmit their data over an \emph{uplink} (a multiple-access channel) to the Relay. The Relay processes this information and transmits over the \emph{downlink} (a broadcast channel) to every node. We assume that each node requires a lossless reconstruction of the data of all other nodes.

The above relay network aims to model communication in cellular and satellite networks. A large body of work has comprehensively studied the network from the perspective of source coding~\cite{Wyner-Jun-2002-A}, channel capacity~\cite{Rankov-Jul-2006-C,Knopp-Feb-2006-C,Rankov-Feb-2007-A,Nam-Nov-2010-A,Cui-2012-A}, and network coding~\cite{Cui-Oct-2009-A,Katti-Jun-2008-A}. However, despite this intense effort, the information-theoretic limits of the network remain largely unknown. 

We study the relay network under two specific assumptions. The first assumption is that the data is arbitrarily correlated -- generated by a discrete-memoryless (DM) source -- and the communications problem involves joint source-channel (JSC) coding. Correlated data might take the form of measurements in a sensor network~\cite{Jindal-Nov-2008-A}, voice data in a cellular network, and data files in a peer-to-peer network. We wish to determine when a given source can be reliably communicated (in the usual Shannon sense) over a given channel. 

\begin{figure*}
\begin{center}
\includegraphics[width=1.8\columnwidth]{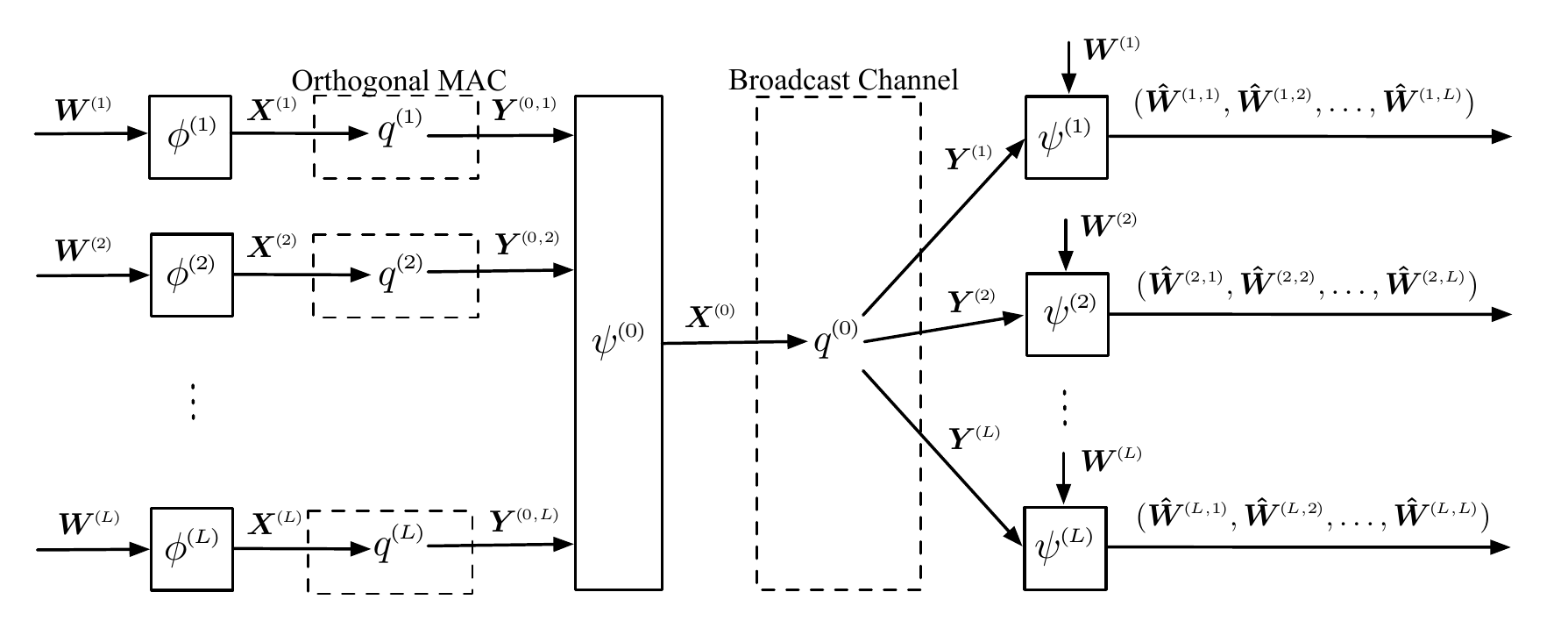}
\caption{The JSC-code architecture for the multi-way relay network with an orthogonal uplink.}
\label{Fig:JSCCArc}
\end{center}
\end{figure*}

The second assumption is that the \emph{downlink} is an arbitrary DM broadcast channel and the \emph{uplink} is an orthogonal DM multiple-access channel. Our motivation to study an orthogonal uplink stems from Shannon's classic separation theorem~\cite[Sec. 7.3]{Cover-2006-B}, which states that the problem of losslessly transmitting a DM source over a point-to-point DM channel can be divided into two independent problems -- source coding and channel coding. Moreover, the individual optimisation of stand-alone source and channel codes is optimal for the overall point-to-point JSC problem. The separation theorem is important in practice because, for example, systems are rarely restricted to transmitting a single source over a fixed channel with known statistics; indeed, to quote Gallager~\cite[Pg. 140]{Gallager-1968-B} (see also~\cite[Pg.~406]{Ahlswede-May-1983-A}):
\begin{quote}
 \emph{``In many data transmission systems the probabilities with which the messages are to be used are either unknown or unmeaningful."}
\end{quote}
The modular nature of the separate source-channel coding architecture allows the source and channel codes to be changed as needed, without compromising overall optimality~\cite{Yeung-2008-B}. Unfortunately, separation may or may not be optimal for networks in general; for example, separation is suboptimal for the multiple-access channel~\cite[Pg.~592]{Cover-2006-B} and the broadcast channel~\cite[Sec.~14.2]{El-Gamal-2011-B}, and it is optimal for the orthogonal multiple-access channel~\cite{Han-1980-A}. Given this state of affairs, it is natural to ask whether separation is optimal for the multi-way relay network. We prove, in this paper, that separation holds for the special case of two nodes. 

\medskip

\emph{Paper Outline:}
\begin{itemize}
\item {\bf Section~\ref{Sec:JSCC}:} We formally define the JSC-coding multi-way relay problem, and we characterise reliable communication with matching single-letter achievability and converse theorems. The achievability proof employs a JSC random-coding argument, which builds on the \emph{virtual binning} idea of Tuncel~\cite{Tuncel-Apr-2006-A} and the \emph{cascaded Slepian-Wolf binning} idea of Wyner \emph{et al.}~\cite{Wyner-Jun-2002-A}.
\item {\bf Section~\ref{Sec:SSCC}:} We formalise a notion of source-channel separation, and we prove a separation theorem for \emph{two} nodes; that is, it is asymptotically (in blocklength) optimal to separate the source and channel coding functions. 
\item {\bf Section~\ref{Sec:Practical}:} We use the two-node separation theorem as a basis to design practical low complexity codes. Specifically, we consider source and channel codes based on low-density parity-check (LDPC) codes. We show how the individual LDPC codes for source and channel coding can be represented by a joint factor-graph~\cite{Kschischang-Feb-2001-A}, and we use this graph to provide an alternative view of the separation theorem. Finally, we present a numerical example and discuss the differences between joint and separate decoding.
\end{itemize}

\medskip

\emph{Notation:} Random variables and their alphabets are identified by uppercase and script letters respectively, e.g. $W$ and $\set{W}$. Random vectors defined on the cartesian product of a set are identified by boldface font, e.g.,
\begin{equation*}
\bm{W} = (W_1,W_2,\ldots,W_n)
\end{equation*}
takes values from
\begin{equation*}
\bm{\set{W}} = \underbrace{\set{W} \times \set{W} \times \cdots \times \set{W}}_{n}.
\end{equation*}
Subsets and strict subsets of an alphabet are identified with $\subseteq$ and $\subset$ respectively. Set complement is denoted by a superscript $c$; e.g., if
\begin{equation*}
\set{L} \subseteq \{1,2,\ldots,L\},
\end{equation*}
then
\begin{equation*}
\set{L}^c \triangleq \{1,2,\ldots,L\} \backslash \set{L}.
\end{equation*}
If $\set{L}$ is a singleton, say $\set{L} = \{l\}$, then we write $\sim l \triangleq \{l\}^c$.


\section{Joint Source-Channel Coding}\label{Sec:JSCC}

\subsection{Setup}

Consider Fig.~\ref{Fig:JSCCArc}. Suppose that a \emph{discrete memoryless source} emits an i.i.d. string
\begin{multline}\label{Eqn:DMS}
(W\bs{1}_1,W\bs{2}_1,\ldots,W\bs{L}_1), (W\bs{1}_2,W\bs{2}_2,\ldots,W\bs{L}_2),\ldots\\
 (W\bs{1}_n,W\bs{2}_n,\ldots,W\bs{L}_n),
\end{multline}
of arbitrarily distributed random variables $(W\bs{1},W\bs{2},\ldots,$ $W\bs{L})$. Let $\set{W}\bs{l}$ denote the alphabet of the $l$-th random variable $W\bs{l}$ for each index $l$ in $\{1,2,\ldots,L\}$. The $W\bs{l}$-component of the sequence in~\eqref{Eqn:DMS} is given to Node~$l$.  The orthogonal \emph{uplink channel} from Node $l$ to the Relay is discrete and memoryless with input alphabet $\set{X}\bs{l}$, output alphabet $\set{Y}\bs{0,l}$, and transition probabilities
\begin{equation*}
q\bs{l}(y|x) \triangleq \mathbb{P}[Y\bs{0,l}=y|X\bs{l}=x].
\end{equation*}
The \emph{downlink broadcast channel} is discrete and memoryless with input alphabet $\set{X}\bs{0}$, output alphabet $\set{Y}\bs{l}$ at Node~$l$, and transition probabilities
\begin{multline*}
q\bs{0}(y_1,y_2,\ldots,y_L|x)
\triangleq \mathbb{P}[Y\bs{1}=y_1,Y\bs{2}=y_2,\\ \ldots, Y\bs{L}=y_L|X\bs{0}=x].
\end{multline*}

A \emph{joint source-channel (JSC) code} with blocklength $n$, see Fig.~\ref{Fig:JSCCArc}, is a collection of $(2L+1)$-maps: the encoder at Node~$l$,
\begin{subequations}\label{Eqn:DefJSCC}
\begin{align}
\phi\bs{l} &:  \bm{\set{W}}\bs{l} \longrightarrow \bm{\set{X}}\bs{l};
\end{align}
the encoder at the Relay,
\begin{align}
\psi\bs{0} &:  \bm{\set{Y}}\bs{0,1} \times \bm{\set{Y}}\bs{0,2} \times \cdots \times \bm{\set{Y}}\bs{0,L}  \longrightarrow \bm{\set{X}}\bs{0};
\end{align}
and the decoder at Node $l$,
\begin{align}
\psi\bs{l} &: \bm{\set{W}}\bs{l} \times \bm{\set{Y}}\bs{l} \longrightarrow \bm{\set{W}}\bs{1} \times \bm{\set{W}}\bs{2} \times \cdots \times \bm{\set{W}}\bs{L}.
\end{align}
\end{subequations}
Node~$l$ observes
\begin{equation*}
\bm{W}\bs{l} \triangleq (W\bs{l}_1,W\bs{l}_2,\ldots,W\bs{l}_n)
\end{equation*}
and transmits
\begin{equation*}
\bm{X}\bs{l} \triangleq \phi\bs{l}(\bm{W}\bs{l}).
\end{equation*}
The Relay observes
\begin{equation*}
\bm{Y}\bs{0,l} \triangleq (Y\bs{0,l}_1,Y\bs{0,l}_2,\ldots,Y\bs{0,l}_n),
\end{equation*}
on the $l$-th uplink channel, and it transmits
\begin{equation*}
\bm{X}\bs{0} \triangleq \psi\bs{0}(\bm{Y}\bs{0,1},\bm{Y}\bs{0,2},\ldots,\bm{Y}\bs{0,L}).
\end{equation*}
Node~$l$ observes
\begin{equation*}
\bm{Y}\bs{l} \triangleq (Y\bs{l}_1,Y\bs{l}_2,\ldots,Y\bs{l}_n)
\end{equation*}
and decodes
\begin{equation*}
(\bm{\hat{W}}\bs{l,1},\bm{\hat{W}}\bs{l,2},\ldots,\bm{\hat{W}}\bs{l,L}) \triangleq \psi\bs{l}(\bm{W}\bs{l},\bm{Y}\bs{l}).
\end{equation*}

\newpage

The \emph{average joint decoding error probability} of a JSC-code is 
\begin{multline}\label{Eqn:JointErrorProb}
P_\text{e} \triangleq \mathbb{P}\Big[(\hat{\bm{W}}\bs{l,1},\ldots,\hat{\bm{W}}\bs{l,L}) \neq (\bm{W}\bs{1},\ldots,\bm{W}\bs{L}) \\
\text{ for one or more }
 l \text{ in } \{1,2,\ldots,L\} \Big].
\end{multline}

\medskip

\begin{definition}
We say that \emph{reliable communication is achievable with JSC codes} if there exists for each $\epsilon > 0$ a code of the form~\eqref{Eqn:DefJSCC} with $P_\text{e} \leq \epsilon$ for some sufficiently large integer $n$.
\end{definition}

\subsection{Main Result}

The following notation is required for the next theorem. Denote the \emph{capacity}~\cite[Eqn.~7.1]{Cover-2006-B} of the $l$-th orthogonal uplink channel by
\begin{equation*}
    \Cu{l} \triangleq \max_{X\bs{l}} I(X\bs{l};Y\bs{0,l}),
\end{equation*}
where the maximisation is over distributions for $X\bs{l}$ on $\set{X}\bs{l}$. If
\begin{equation*}
\set{L}=\big\{l_1,l_2,\ldots,l_{|\set{L}|}\big\}
\end{equation*}
is a nonempty subset of $\{1,2,\ldots,L\}$, then let
\begin{equation*}
W\bs{\set{L}} \triangleq \big(W\bs{l_1},W\bs{l_2},\ldots,W\bs{l_{|\set{L}|}})
\end{equation*}
denote those random variables with indices belonging to $\set{L}$. The next theorem is proved in Appendix~\ref{App:ProofOfThm:JSCC}.

\medskip

\begin{theorem}\label{Thm:JSCC}
Reliable communication is achievable with JSC codes if
\begin{subequations}\label{Eqn:JSCC}
\begin{equation}
\label{Eqn:JSCCUp}
H\big(W\bs{\set{L}}|W\bs{\set{L}^c}\big) < \sum_{l \in \set{L}} \Cu{l}
\end{equation}
holds for each nonempty \emph{strict} subset $\set{L}$ of $\{1,2,\ldots,L\}$, and there exists a distribution for $X\bs{0}$ on $\set{X}\bs{0}$ such that
\begin{equation}
\label{Eqn:JSCCDown}
H\big(W\bs{\sim l}|W\bs{l}\big) < I(X\bs{0};Y\bs{l})
\end{equation}
holds for each $l$ in $\{1,2,\ldots,L\}$.
\end{subequations}
Conversely, if reliable communication is achievable then~\eqref{Eqn:JSCCUp} or~\eqref{Eqn:JSCCDown} hold as inequalities --- instead of strict inequalities --- for some $X\bs{0}$.
\end{theorem}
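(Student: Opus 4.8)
The plan is to establish achievability with a three‑stage cascade --- point‑to‑point source--channel separation on each orthogonal uplink, bin forwarding at the Relay in the spirit of Wyner \emph{et al.}, and Tuncel's virtual binning on the downlink --- and to establish the converse with the usual Fano/data‑processing/single‑letterisation argument, so that the two bounds meet on the closure of the region~\eqref{Eqn:JSCC}. For achievability I would first fix Slepian--Wolf binning rates $R_1,\dots,R_L$ with $R_l<\Cu{l}$ for every $l$ and $H(W\bs{\set{L}}|W\bs{\set{L}^c})<\sum_{l\in\set{L}}R_l$ for every nonempty \emph{strict} subset $\set{L}$; such rates exist by~\eqref{Eqn:JSCCUp} (take $R_l=\Cu{l}-\delta$ with $\delta$ small, which shrinks each partial sum by at most $|\set{L}|\delta$). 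Node~$l$ maps $\bm{W}\bs{l}$ to a random bin index $\bm{B}\bs{l}$ of rate $R_l$ and protects it with an ordinary channel code for the $l$‑th uplink, so the Relay recovers every $\bm{B}\bs{l}$ reliably; the Relay never decodes the sources themselves --- this is precisely why no ``full‑set'' version of~\eqref{Eqn:JSCCUp} is needed --- it merely forwards the indices. On the downlink the Relay must let Node~$l$, which already holds $\bm{W}\bs{l}$ and hence $\bm{B}\bs{l}$, reconstruct $\bm{W}\bs{\sim l}$; I would split this into delivering $(\bm{B}\bs{j})_{j\ne l}$ to Node~$l$ and then inverting the Slepian--Wolf code, the inversion succeeding with high probability because the chosen $(R_j)$ dominate the Slepian--Wolf region for the $L-1$ sources $\bm{W}\bs{\sim l}$ with decoder side information $\bm{W}\bs{l}$. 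To deliver $(\bm{B}\bs{j})_{j\ne l}$ I would run Tuncel's Slepian--Wolf‑over‑broadcast construction with $(\bm{B}\bs{1},\dots,\bm{B}\bs{L})$ playing the role of the transmitter's object and $\bm{W}\bs{l}$ the role of Node~$l$'s side information: virtual binning of the downlink codebook lets every node decode under one input distribution $X\bs{0}$ provided $\frac1n H\big((\bm{B}\bs{j})_{j\ne l}\,\big|\,\bm{W}\bs{l}\big)<I(X\bs{0};Y\bs{l})$ for all $l$. Since $(\bm{B}\bs{j})_{j\ne l}$ is a deterministic function of $\bm{W}\bs{\sim l}$, its conditional entropy rate is at most $H(W\bs{\sim l}|W\bs{l})$, so~\eqref{Eqn:JSCCDown} implies that condition, and a union bound over uplink channel errors, downlink virtual‑binning errors, and Slepian--Wolf inversion errors sends $P_\text{e}\to0$.

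For the converse I would take any sequence of codes with $P_\text{e}\to0$; then each Node~$l$ recovers $\bm{W}\bs{\sim l}$ from $(\bm{W}\bs{l},\bm{Y}\bs{l})$ with vanishing error, so Fano gives $\tfrac1n H(\bm{W}\bs{\sim l}\mid\bm{W}\bs{l},\bm{Y}\bs{l})\le\epsilon_n\to0$. For the uplink inequality attached to a nonempty strict $\set{L}$, pick $m\in\set{L}^c$; Node~$m$ must recover $\bm{W}\bs{\set{L}}\subseteq\bm{W}\bs{\sim m}$, so $H(\bm{W}\bs{\set{L}}\mid\bm{W}\bs{\set{L}^c})\le I(\bm{W}\bs{\set{L}};\bm{Y}\bs{m}\mid\bm{W}\bs{\set{L}^c})+n\epsilon_n$ (using $m\in\set{L}^c$). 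Conditioned on $\bm{W}\bs{\set{L}^c}$ the uplink outputs $(\bm{Y}\bs{0,l})_{l\in\set{L}^c}$ are independent of $\bm{W}\bs{\set{L}}$ and of $(\bm{Y}\bs{0,l})_{l\in\set{L}}$, hence $\bm{W}\bs{\set{L}}\markov(\bm{Y}\bs{0,l})_{l\in\set{L}}\markov\bm{X}\bs{0}\markov\bm{Y}\bs{m}$; data processing, subadditivity across the orthogonal memoryless uplinks, and $I(\bm{X}\bs{l};\bm{Y}\bs{0,l})\le n\Cu{l}$ then give $H(\bm{W}\bs{\set{L}}\mid\bm{W}\bs{\set{L}^c})\le n\sum_{l\in\set{L}}\Cu{l}+n\epsilon_n$, and dividing by $n$ and using the i.i.d.\ source yields~\eqref{Eqn:JSCCUp} with ``$\le$''. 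For the downlink inequality, $H(\bm{W}\bs{\sim l}\mid\bm{W}\bs{l})\le I(\bm{W};\bm{Y}\bs{l})+n\epsilon_n\le I(\bm{X}\bs{0};\bm{Y}\bs{l})+n\epsilon_n\le\sum_t I(X\bs{0}_t;Y\bs{l}_t)+n\epsilon_n$, and a uniform time index $Q$ single‑letterises the sum with the \emph{empirical} input distribution $X\bs{0}:=X\bs{0}_Q$, which is the same for every $l$; this gives $H(W\bs{\sim l}|W\bs{l})\le I(X\bs{0};Y\bs{l})+\epsilon_n$ simultaneously for all $l$, and compactness of the probability simplex on the finite alphabet $\set{X}\bs{0}$ together with continuity of mutual information lets me pass to a subsequential limit to obtain~\eqref{Eqn:JSCCDown} with ``$\le$'' for a single $X\bs{0}$.

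I expect the downlink step of achievability to be the main obstacle. The object the Relay broadcasts is a \emph{bin index}, not an i.i.d.\ sequence, so the joint‑typicality error analysis at Node~$l$ must be carried out on the underlying i.i.d.\ sources $(\bm{W}\bs{1},\dots,\bm{W}\bs{L})$ --- which are jointly typical with $\bm{Y}\bs{l}$ --- while the binning and the virtual binning live one level up at the indices $\bm{B}\bs{l}$, and three coupled error events (a wrong downlink codeword, wrong index values, a wrong Slepian--Wolf inversion) have to be controlled together so that the rate accounting collapses to exactly $\sum_{l\in\set{L}}\Cu{l}$ on the uplink --- with no full‑set term --- and to exactly $H(W\bs{\sim l}|W\bs{l})$ on the downlink. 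The residual gap between the strict inequalities in~\eqref{Eqn:JSCC} and the non‑strict ones obtained from the converse is the standard boundary effect and is to be expected.
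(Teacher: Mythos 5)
Your proposal follows essentially the same route as the paper: Slepian--Wolf binning plus point-to-point channel codes on each orthogonal uplink (the Relay recovers only the bin indices, which is exactly why no total-sum-rate constraint is needed), one randomly generated downlink codeword per bin tuple in Tuncel's virtual-binning style, joint source-channel decoding at each node using its own source as side information, and a Fano/data-processing/orthogonality/concavity converse with a time-averaged input distribution and a limiting argument for a single $X\bs{0}$. The one caveat --- that Tuncel's theorem cannot be invoked verbatim for the non-i.i.d.\ bin indices, so the typicality analysis must run on the underlying sources with the wrong-codeword and bin-collision events controlled jointly --- is precisely the point you flag in your closing paragraph, and the paper resolves it exactly as you anticipate, via the list $\set{Q}\bs{1}(\bm{w}\bs{1})$ of bin tuples containing a unique conditionally typical source tuple, whose size is at most $2^{nH(W\bs{\sim 1}|W\bs{1})(1+\delta)}$.
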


\subsection{Remarks}

\emph{Non-matched symbol rates:} Theorem~\ref{Thm:JSCC} characterises reliable communication for matched source and channel symbol rates; i.e., $n$ source symbols are mapped to $n$ channel symbols. The proof easily extends to the non-matched symbol rate setting where $n$ source symbols map to $m$ channel symbols.

\medskip

\emph{Networks of Channels:} The uplink condition~\eqref{Eqn:JSCCUp} closely resembles Han's generalisation~\cite[Sec.~1]{Han-1980-A} of the Slepian-Wolf/Cover theorem~\cite{Slepian-Jul-1973-A} to networks of noisy orthogonal channels (see also Barros and Servetto~\cite{Barros-Jan-2006-A}). Indeed, all but one of the inequalities appearing in~\cite[p.~1]{Han-1980-A} also appear as uplink constraints in~\eqref{Eqn:JSCCUp} --- the exception being a total sum rate constraint of the form
\begin{equation}\label{Eqn:HanExtraInequality}
H(W\bs{1},W\bs{2},\ldots,W\bs{L}) \leq \sum_{l=1}^L \Cu{l}.
\end{equation}
Although our problem formulation differs from that of~\cite{Han-1980-A,Barros-Jan-2006-A}, the similarity of these results can be understood by comparing the respective achievability proofs. Han~\cite{Han-1980-A} uses a simple separate source-channel coding argument: he combines an optimal Slepian-Wolf code with optimal channel codes for each orthogonal uplink. In Han's setup, reliable communication is possible if~\eqref{Eqn:JSCCUp} and~\eqref{Eqn:HanExtraInequality} both hold. The uplink part of our proof essentially  uses the same argument, except we do not require that~\eqref{Eqn:HanExtraInequality} holds; i.e., we use fewer bins and, as a consequence, the Relay cannot decode the sources. Indeed, in our setup, the Relay needs only to recover the Slepian-Wolf bin indices and not the individual source sequences. The downlink achievability proof requires JSC coding and is discussed next. 

\medskip

\emph{Joint Source-Channel Coding:} The (downlink) achievability proof of Theorem~\ref{Thm:JSCC} is based on a JSC random-coding argument that builds upon the virtual binning idea developed by Tuncel in~\cite{Tuncel-Apr-2006-A}. To illustrate why the virtual binning approach is useful, momentarily suppose that the Relay is given the entire source $L$-tuple $(\bm{W}\bs{1},\bm{W}\bs{2},\ldots,\bm{W}\bs{L})$ and consider the downlink phase in isolation. With the setup of~\cite{Tuncel-Apr-2006-A} in mind, we can view $(\bm{W}\bs{1},\bm{W}\bs{2},\ldots,\bm{W}\bs{L})$ as a common message that needs to be reliably decoded by every node. Applying~\cite[Thm.~6]{Tuncel-Apr-2006-A} we immediately see that the common message can be reliably decoded by every node whenever~\eqref{Eqn:JSCCDown} holds.

The basic idea behind the proof of~\cite[Thm.~6]{Tuncel-Apr-2006-A} is to randomly generate a downlink channel codeword ($n$ i.i.d. symbols $\sim P_{X\bs{0}}$) for each and every jointly typical source tuple $(\bm{W}\bs{1},\bm{W}\bs{2},\ldots,\bm{W}\bs{L})$. Upon observing a typical source tuple\footnote{An error is declared if the source is not jointly typical.}, the Relay transmits the corresponding channel codeword. Node~$l$, upon observing the channel output $\bm{Y}\bs{l}$, compiles a list of all those channel codewords that are jointly typical with $\bm{Y}\bs{l}$. The codeword list corresponds to list of typical source sequences, with the same number of elements. We may think of the source list as a \emph{(virtual) random bin} in the sense of the classic Slepian-Wolf Theorem~\cite{Kramer-2008-A}. Node~$l$ looks within this list for a unique source tuple that is jointly typical with its source $\bm{W}\bs{l}$; this search will be successful with high probability whenever~\eqref{Eqn:JSCCDown} holds.

The above argument assumes that the entire source tuple is made available to the Relay, which is not the case  in the multi-way relay network. The key difficulty in proving Theorem~\ref{Thm:JSCC} is to overcome the fact that the Relay only has partial knowledge of the source tuple.

\medskip

\emph{Processing Broadcast Satellite:} The source coding work of Wyner \emph{et al.}~\cite[Thm.~1]{Wyner-Jun-2002-A} is a special case of Theorem~\ref{Thm:JSCC}.


\section{Separate Source-Channel Coding}\label{Sec:SSCC}

We now compare the general JSC coding architecture of Section~\ref{Sec:JSCC} to a separate source-channel coding architecture.

\subsection{Channel Coding}\label{Sec:ChannelCoding}

The channel-coding problem of interest is analogous to the JSC-coding problem in Fig.~\ref{Fig:JSCCArc} with one exception: the discrete memoryless source $(\bm{W}\bs{1},\bm{W}\bs{2},\ldots,\bm{W}\bs{L})$ is replaced by $L$-independent random variables $(M\bs{1},M\bs{2},\ldots,$ $M\bs{L})$, where each $M\bs{l}$ is uniformly distributed on $\set{M}\bs{l}$.

A channel code with blocklength $n$ is a collection of maps: the encoder at Node~$l$,
\begin{subequations}\label{Eqn:ChannelCode}
\begin{equation}
\phi\bs{l}_\text{c} : \set{M}\bs{l} \longrightarrow \bm{\set{X}}\bs{l};
\end{equation}
the encoder at the Relay,
\begin{equation}
\psi\bs{0}_\text{c} : \bm{\set{Y}}\bs{0,1} \times \bm{\set{Y}}\bs{0,2} \times \cdots \times \bm{\set{Y}}\bs{0,L} \longrightarrow \bm{\set{X}}\bs{0};
\end{equation}
and the decoder at Node~$l$,
\begin{multline}
\psi\bs{l}_{\text{c}} : \set{M}\bs{l} \times \bm{\set{Y}}\bs{l} \longrightarrow \set{M}\bs{1} \times \set{M}\bs{2} \times \cdots \times \set{M}\bs{L}.
\end{multline}
\end{subequations}
The channel code operates in a manner analogous to the JSC-code: Node~$l$ sends
\begin{equation*}
\bm{X}\bs{l} \triangleq \phi\bs{l}(M\bs{l})
\end{equation*}
and decodes
\begin{equation*}
(\hat{M}\bs{l,1},\hat{M}\bs{l,2},\ldots,\hat{M}\bs{l,L}) \triangleq \psi\bs{l}(M\bs{l},\bm{Y}\bs{l}).
\end{equation*}
The \emph{average joint error probability} of a channel code is defined by
\begin{equation*}
\begin{split}
P_\text{e} \triangleq \mathbb{P}\big[(\hat{M}\bs{l,1},\hat{M}\bs{l,2},&\ldots,\hat{M}\bs{l,L}) \neq (M\bs{1},M\bs{2},\ldots,M\bs{L})\\
&\ \ \text{ for one or more  $l$  in $\{1,2,\ldots,L\}$}\big].
\end{split}
\end{equation*}
The \emph{rate} at which Node~$l$ transmits the message $M\bs{l}$ is defined by
\begin{equation*}
\eta\bs{l} \triangleq \frac{1}{n} \log_2 |\set{M}\bs{l}|.
\end{equation*}

\medskip

\begin{definition}
A nonnegative rate tuple $(r\bs{1},r\bs{2},\ldots,r\bs{L})$ is said to be \emph{achievable} if the following holds: for each $\epsilon > 0$ there exists a channel code of the form~\eqref{Eqn:ChannelCode} with $P_\text{e} \leq \epsilon$ and $\eta\bs{l} \geq r\bs{l} - \epsilon$ for some sufficiently large integer $n$.
\end{definition}

\medskip

\begin{definition}
The \emph{capacity region} $\set{C}$ is the set of all achievable rates.
\end{definition}

\medskip

We now give a single-letter expression for $\set{C}$. Let $\set{C}^*$ denote those nonnegative rate tuples $(r\bs{1},r\bs{2},\ldots,r\bs{L})$ for which
\begin{enumerate}
\item[(i)] the uplink channel capacities satisfy
\begin{equation*}
r\bs{l} \leq C_{\text{up}}\bs{l}
\end{equation*}
for all $l = 1,2,\ldots,L$; and
\item[(ii)] there is a distribution for $X\bs{0}$ on $\set{X}\bs{0}$ such that 
\begin{equation*}
\sum_{l' \neq l} r\bs{l'} \leq I(X\bs{0};Y\bs{l})
\end{equation*}
holds for all $l = 1,2,\ldots,L$.
\end{enumerate}

\medskip

\begin{lemma}\label{Lem:ChannelCoding}
$\set{C} = \set{C}^*$.
\end{lemma}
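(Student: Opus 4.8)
The plan is to prove $\set{C} = \set{C}^*$ by establishing the two inclusions $\set{C}^* \subseteq \set{C}$ (achievability) and $\set{C} \subseteq \set{C}^*$ (converse) separately. For achievability, I would observe that the channel-coding problem is exactly the JSC-coding problem of Theorem~\ref{Thm:JSCC} specialised to an independent source: take $W\bs{l} = M\bs{l}$ uniform on $\set{M}\bs{l}$ with the $M\bs{l}$ mutually independent. Then $H(W\bs{\set{L}}|W\bs{\set{L}^c}) = \sum_{l\in\set{L}} \log_2|\set{M}\bs{l}| = n\sum_{l\in\set{L}}\eta\bs{l}$ and $H(W\bs{\sim l}|W\bs{l}) = \sum_{l'\neq l}\log_2|\set{M}\bs{l'}| = n\sum_{l'\neq l}\eta\bs{l'}$. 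Given a target rate tuple in the interior of $\set{C}^*$, one chooses message set sizes so that $\eta\bs{l}$ is slightly below $r\bs{l}$ (so $\eta\bs{l} < \Cu{l}$ strictly) and $\sum_{l'\neq l}\eta\bs{l'} < I(X\bs{0};Y\bs{l})$ for the witnessing $X\bs{0}$; then Theorem~\ref{Thm:JSCC}'s sufficient condition holds, giving a code with $P_\text{e}\le\epsilon$ and $\eta\bs{l}\ge r\bs{l}-\epsilon$. A standard limiting/closure argument then extends this from the interior to all of $\set{C}^*$. One subtlety: Theorem~\ref{Thm:JSCC}'s uplink condition~\eqref{Eqn:JSCCUp} is stated only for nonempty \emph{strict} subsets $\set{L}$, so it delivers $r\bs{l}\le\Cu{l}$ for each $l$ (singletons), which is exactly condition~(i); I should note that the full-set constraint $\sum_l r\bs{l}\le\sum_l\Cu{l}$ is \emph{not} needed — matching the ``Networks of Channels'' remark — because $\set{C}^*$ does not impose it.

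For the converse $\set{C}\subseteq\set{C}^*$, suppose $(r\bs{1},\ldots,r\bs{L})$ is achievable and fix a sequence of good codes. For condition~(i): since Node~$l$'s message $M\bs{l}$ must be recovered from $\bm{Y}\bs{0,l}$ (only the $l$-th uplink carries information about $M\bs{l}$, as the uplink is orthogonal and $\phi\bs{l}_\text{c}$ depends only on $M\bs{l}$), a cut-set / data-processing plus Fano argument across the $l$-th uplink gives $\eta\bs{l} \le \Cu{l} + \delta_n$ with $\delta_n\to0$, hence $r\bs{l}\le\Cu{l}$. More carefully: $n\eta\bs{l} = H(M\bs{l}) = I(M\bs{l};\bm{Y}\bs{l},\{M\bs{l'}\}_{l'})+H(M\bs{l}|\cdots)$; using independence of the messages and the Markov chain $M\bs{l}\markov\bm{X}\bs{l}\markov\bm{Y}\bs{0,l}\markov(\bm{X}\bs{0},\bm{Y}\bs{l})$ together with Fano's inequality applied to the fact that \emph{some} node must decode $M\bs{l}$, one bounds $I(M\bs{l};\bm{Y}\bs{0,l})\le\sum_{i=1}^n I(X\bs{l}_i;Y\bs{0,l}_i)\le n\Cu{l}$ by the memorylessness of the uplink and the single-letter capacity definition. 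For condition~(ii): fix any node $l$; it must decode all of $(M\bs{l'})_{l'\neq l}$ from $(\bm{Y}\bs{l}, M\bs{l})$. Since the $M\bs{l'}$ are independent of $M\bs{l}$, Fano gives $\sum_{l'\neq l} n\eta\bs{l'} = H\big((M\bs{l'})_{l'\neq l}\big) = H\big((M\bs{l'})_{l'\neq l}\,\big|\,M\bs{l}\big) \le I\big((M\bs{l'})_{l'\neq l};\bm{Y}\bs{l}\,\big|\,M\bs{l}\big) + n\epsilon_n$, and the mutual information term is bounded, via the Markov chain $(M\bs{l'})_{l'\neq l}\markov\bm{X}\bs{0}\markov\bm{Y}\bs{l}$ (conditioning on $M\bs{l}$, using that $\bm{X}\bs{0}$ is a function of the uplink outputs) and single-letterisation with a time-sharing variable, by $\sum_i I(X\bs{0}_i;Y\bs{l}_i) \le n\max_{P_{X\bs{0}}} I(X\bs{0};Y\bs{l})$. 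Taking $n\to\infty$ yields a \emph{single} distribution? — here is the one genuinely delicate point.

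The main obstacle is exactly this: condition~(ii) of $\set{C}^*$ requires a \emph{single} distribution $P_{X\bs{0}}$ that simultaneously satisfies all $L$ inequalities $\sum_{l'\neq l} r\bs{l'}\le I(X\bs{0};Y\bs{l})$, whereas the naive per-$l$ single-letterisation above might a priori produce a different empirical input distribution for each $l$. To handle this I would carry a common time-sharing random variable $Q$ (uniform on $\{1,\ldots,n\}$, independent of everything, with $X\bs{0}_Q$ the single-letter input), so that for every $l$ one gets $\sum_{l'\neq l}\eta\bs{l'} \le I(X\bs{0}_Q;Y\bs{l}_Q\mid Q) + \epsilon_n \le I(X\bs{0}_Q,Q;Y\bs{l}_Q)+\epsilon_n$; the same $(Q,X\bs{0}_Q)$ serves all $l$ because the $n$ relay inputs $\bm{X}\bs{0}$ are the same physical sequence seen by every node's (possibly different) downlink. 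Then $X\bs{0} := (Q,X\bs{0}_Q)$ — or rather, since $Q$ is independent of the message and we only need a distribution on $\set{X}\bs{0}$, one can either enlarge the notion of input or argue (by convexity of $\set{C}^*$ in the relevant sense, or by the usual argument that the downlink mutual informations $I(X\bs{0};Y\bs{l})$ are computed under a common $P_{X\bs{0}}$ and one picks the best time-share point) that a deterministic single $P_{X\bs{0}}$ suffices. I would also need to check that the ``$P_\text{e}\to 0$ for \emph{all} nodes simultaneously'' hypothesis lets me apply Fano's inequality jointly at every node with a single vanishing $\epsilon_n$, and that the orthogonality of the uplink is what decouples condition~(i) into per-node capacity bounds (this is where the argument would fail for a general multiple-access uplink). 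Finally, a brief continuity/closure remark handles the passage from strict to non-strict inequalities, matching the form of $\set{C}^*$.
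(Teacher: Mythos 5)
Your converse sketch is essentially the paper's own converse for Theorem~\ref{Thm:JSCC} transplanted to messages (Fano at each node, orthogonality of the uplink to decouple condition~(i), single-letterisation of the downlink), and the ``delicate point'' you flag is resolved exactly as the paper resolves it there: define the time-averaged input pmf $p_{\tilde{X}\bs{0}}(x) = \frac{1}{n}\sum_{i=1}^{n} p_{X\bs{0}_i}(x)$; since the downlink is time-invariant and $I(X\bs{0};Y\bs{l})$ is concave in the input pmf, $\frac{1}{n}\sum_{i} I(X\bs{0}_i;Y\bs{l}_i) \leq I(\tilde{X}\bs{0};Y\bs{l})$ holds \emph{simultaneously for every} $l$ with the same $\tilde{X}\bs{0}$, and compactness of the simplex on the finite alphabet $\set{X}\bs{0}$ gives a limiting pmf. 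No time-sharing variable $Q$ needs to be carried or collapsed.

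The genuine gap is in your achievability reduction. Setting $W\bs{l}=M\bs{l}$ does not produce an instance of Theorem~\ref{Thm:JSCC}: that theorem concerns a discrete memoryless source on a \emph{fixed} alphabet with \emph{single-letter} conditions, whereas $M\bs{l}$ is one uniform variable on an alphabet of size $2^{n\eta\bs{l}}$ that grows with blocklength; your identity $H(W\bs{\set{L}}|W\bs{\set{L}^c}) = n\sum_{l\in\set{L}}\eta\bs{l}$ inserts a block entropy into a per-symbol condition. The natural repairs each require real work: (i)~modelling $M\bs{l}$ as $n$ i.i.d.\ uniform letters does make $\bm{W}\bs{l}$ a genuine uniform message, but only at rates of the form $\log_2 k$, so arbitrary $r\bs{l}$ would need the non-matched symbol-rate extension that the paper only remarks on; (ii)~using a biased i.i.d.\ source with per-letter entropy just below $r\bs{l}$ gives a transmissible \emph{source}, not a channel code: converting small average error under the source law into small error for uniform messages on product message sets costs a factor of order $2^{n\,c\delta}$ from the spread of typical-sequence probabilities, which the (non-exponential) error guarantee of Theorem~\ref{Thm:JSCC} does not absorb, and extracting good \emph{product} subsets for the $L$ distributed encoders is not automatic. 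The paper sidesteps all of this by re-running the random-coding proof with the message tuple playing the role of the bin-index tuple (which is already uniform and independent by construction), i.e.\ replacing $H(W\bs{\set{L}}|W\bs{\set{L}^c})$ by $\sum_{l\in\set{L}}\eta\bs{l}$ in the error analysis, with point-to-point uplink codes giving $\eta\bs{l}<\Cu{l}$; that re-derivation, rather than a black-box appeal to Theorem~\ref{Thm:JSCC}, is what your achievability argument is missing.
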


\medskip

\begin{proof}
The lemma can be proved in the same way as Theorem~\ref{Thm:JSCC} with $(\bm{W}\bs{1},\bm{W}\bs{2},$ $\ldots,\bm{W}\bs{L})$ replaced by $(M\bs{1},M\bs{2},\ldots,$ $M\bs{L})$ and $H(W\bs{\set{L}}|W\bs{\set{L}^c})$  replaced by $\sum_{l \in \set{L}} \eta\bs{l}$. We omit the technical details.
\end{proof}

\begin{figure}
\begin{center}
\includegraphics[width=0.5\textwidth]{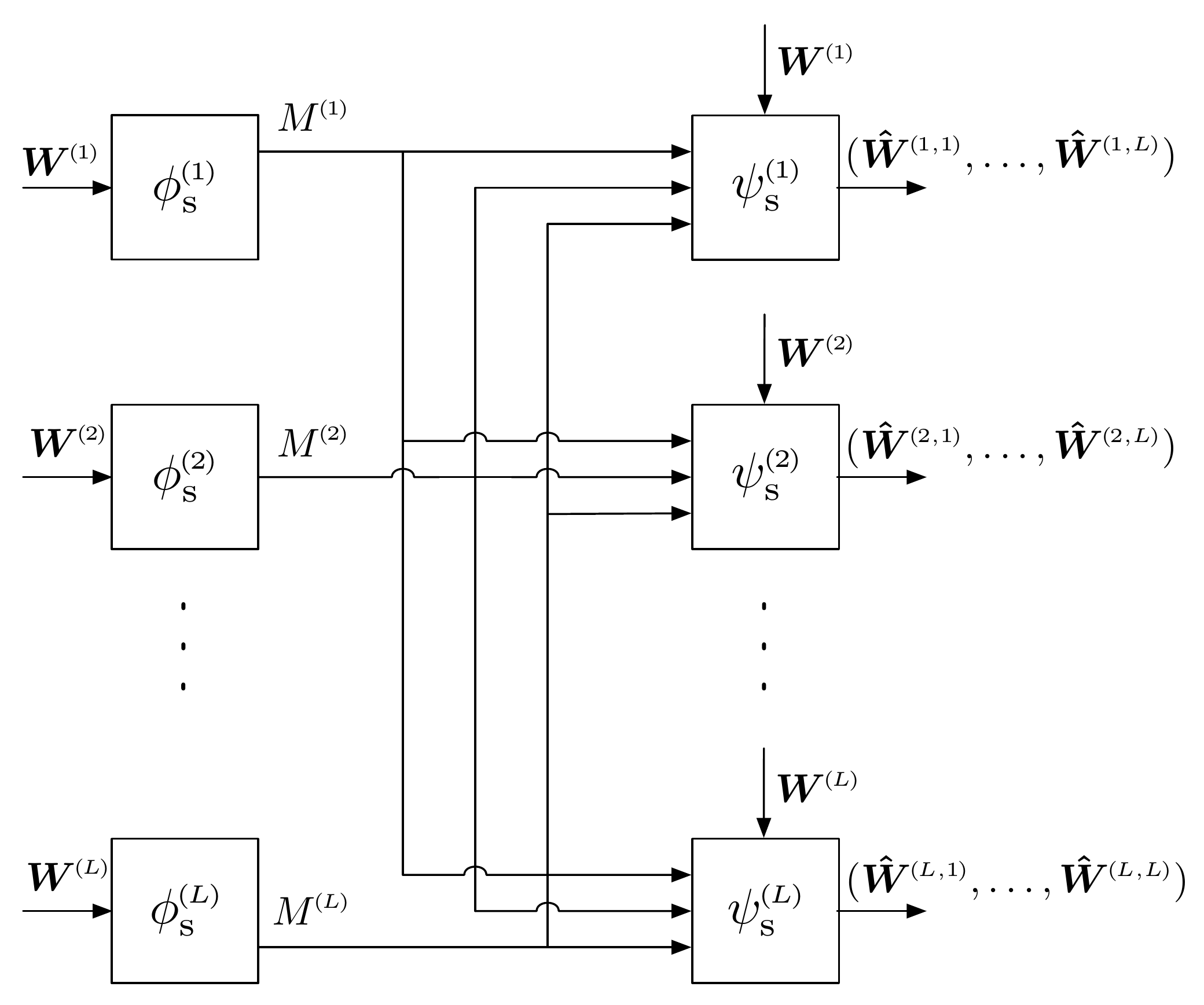}
\caption{The multi-way relay network source coding problem. The source code is designed on the assumption that the $L$-indices $(M\bs{1},M\bs{2},\ldots,M\bs{L})$ can be reliably transported over the network by a channel code.}
\label{Fig:SourceCodingArc}
\end{center}
\end{figure}

\subsection{Source Coding}\label{Sec:SourceCoding}

The source-coding problem of interest is the multi-source multicast problem shown in Fig.~\ref{Fig:SourceCodingArc}. The problem is the source coding counterpart of the channel coding problem of Section~\ref{Sec:ChannelCoding} in the following sense: a source code from this section combined with a channel code from the Section~\ref{Sec:ChannelCoding} produces a (separate source-channel) code for the overall JSC problem.

\begin{subequations}\label{Eqn:SourceCodeDefinition}
A source code of length $n$ is a collection of $2L$-maps: the compressor at Node~$l$,
\begin{equation}
\phi_\text{s} : \bm{\set{W}}\bs{l} \longrightarrow \set{M}\bs{l};
\end{equation}
and the decompressor at Node~$l$,
\begin{multline}
\psi_\text{s} : \bm{\set{W}}\bs{l} \times \set{M}\bs{1} \times \set{M}\bs{2} \times \cdots \times \set{M}\bs{L} \\ \longrightarrow \bm{\set{W}}\bs{1} \times \bm{\set{W}}\bs{2} \times \cdots \times \bm{\set{W}}\bs{L}.
\end{multline}
\end{subequations}
Node~$l$ sends 
\begin{equation*}
M\bs{l} \triangleq \phi_\text{s}\bs{l}(\bm{W}\bs{l})
\end{equation*}
and decompresses
\begin{equation*}
(\bm{\hat{W}}\bs{l,1},\bm{\hat{W}}\bs{l,2},\ldots,\bm{\hat{W}}\bs{l,L}) \triangleq \psi\bs{l}_\text{s}(M\bs{1},M\bs{2},\ldots,M\bs{L}).
\end{equation*}
The \emph{joint decoding error probability} is defined in the same way as~\eqref{Eqn:JointErrorProb}, and the \emph{compression rate} of Node~$l$ is defined by
\begin{equation*}
\eta_\text{s}\bs{l} \triangleq \frac{1}{n} \log_2 |\set{M}\bs{l}|.
\end{equation*}

\medskip

\begin{definition}
A nonnegative rate tuple $(r\bs{1},r\bs{2},\ldots,r\bs{L})$ is said to be \emph{achievable} if the following holds: for each $\epsilon > 0$ there exists a source code of the form~\eqref{Eqn:SourceCodeDefinition} with $n$ sufficiently large, $P_\text{e} \leq \epsilon$ and $\eta_\text{s}\bs{l} \leq r\bs{l} + \epsilon$ for all $l = 1,2,\ldots,L$.
\end{definition}

\medskip

\begin{definition}
The \emph{source-coding rate region} $\set{R}$ is defined as the set of all achievable rate tuples.
\end{definition}

\medskip

\begin{lemma}\label{Lem:SourceCoding}
$\set{R}$ is equal to the set of all nonnegative rate tuples $(r\bs{1},r\bs{2},$ $\ldots,r\bs{L})$ for which
\begin{equation*}
\sum_{l \in \set{L}} r\bs{l} \geq H(W\bs{\set{L}}|W\bs{\set{L}^c})
\end{equation*}
holds for each nonempty and strict subset $\set{L}$ of $\{1,2,\ldots,L\}$.
\end{lemma}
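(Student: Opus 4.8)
\textbf{Proof proposal for Lemma~\ref{Lem:SourceCoding}.}
The plan is to recognise this source-coding problem as a lossless multicast problem with side information, for which the achievable region is governed by Slepian--Wolf-type cut constraints, and to give matching achievability and converse arguments. For achievability I would use a standard random binning argument: each Node~$l$ maps its source block $\bm{W}\bs{l}$ into a bin index $M\bs{l}$ chosen uniformly at random from $\set{M}\bs{l}$, with $|\set{M}\bs{l}| = 2^{n(r\bs{l}+\epsilon)}$. After all bin indices $(M\bs{1},\ldots,M\bs{L})$ are delivered, each Node~$l$ uses its own source $\bm{W}\bs{l}$ as side information and performs joint-typicality decoding: it looks for the unique jointly typical tuple $(\bm{\hat w}\bs{1},\ldots,\bm{\hat w}\bs{L})$ consistent with the received bin indices and with $\bm{\hat w}\bs{l} = \bm{W}\bs{l}$. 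A union bound over error events, indexed by the subset $\set{L}$ of coordinates in which the decoded tuple differs from the true one (so $\set{L}^c \supseteq \{l\}$ is where the decoder has side information), shows the error probability vanishes provided $\sum_{l\in\set{L}} r\bs{l} > H(W\bs{\set{L}}\mid W\bs{\set{L}^c})$ for every nonempty $\set{L}\subseteq\{1,\ldots,L\}\setminus\{l\}$; taking the union over all $L$ decoders gives exactly the stated region, since for each nonempty strict subset $\set{L}$ there is a node index in $\set{L}^c$ and the decoder at that node enforces the corresponding inequality. One subtlety worth noting: the constraints range over nonempty \emph{strict} subsets only, because the node with side information is always outside $\set{L}$, so the full-set constraint never appears --- this is the source-coding shadow of the fact that~\eqref{Eqn:HanExtraInequality} is \emph{not} needed in Theorem~\ref{Thm:JSCC}.

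For the converse, I would fix an achievable rate tuple with a sequence of good codes ($P_\text{e}\to 0$, $\eta_\text{s}\bs{l}\le r\bs{l}+\epsilon$) and, for each nonempty strict subset $\set{L}$ of $\{1,2,\ldots,L\}$, pick a node index $l^\star\in\set{L}^c$ (possible precisely because $\set{L}$ is strict). Node~$l^\star$ reconstructs all of $\bm{W}\bs{\set{L}}$ from $(M\bs{1},\ldots,M\bs{L})$ and $\bm{W}\bs{l^\star}$ with small error, so by Fano's inequality
\begin{equation*}
H\big(\bm{W}\bs{\set{L}} \,\big|\, M\bs{1},\ldots,M\bs{L}, \bm{W}\bs{l^\star}\big) \le n\delta_n,
\end{equation*}
with $\delta_n\to 0$. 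Then chain: $H(\bm{W}\bs{\set{L}}\mid \bm{W}\bs{l^\star}) \le I(\bm{W}\bs{\set{L}}; M\bs{1},\ldots,M\bs{L}\mid \bm{W}\bs{l^\star}) + n\delta_n$, and since conditioning on $\bm{W}\bs{\set{L}^c}$ (which includes $\bm{W}\bs{l^\star}$) and dropping the indices $M\bs{l'}$ with $l'\in\set{L}^c$ — which are functions of $\bm{W}\bs{l'}$ — only helps, one bounds $I(\bm{W}\bs{\set{L}}; M\bs{1},\ldots,M\bs{L}\mid \bm{W}\bs{l^\star}) \le \sum_{l\in\set{L}} H(M\bs{l}) \le n\sum_{l\in\set{L}}(r\bs{l}+\epsilon)$. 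Using the i.i.d. property, $H(\bm{W}\bs{\set{L}}\mid \bm{W}\bs{\set{L}^c}) = n\,H(W\bs{\set{L}}\mid W\bs{\set{L}^c})$, and one needs $H(\bm{W}\bs{\set{L}}\mid \bm{W}\bs{l^\star}) \ge H(\bm{W}\bs{\set{L}}\mid \bm{W}\bs{\set{L}^c})$, which holds because conditioning reduces entropy. Dividing by $n$ and letting $\epsilon,\delta_n\to 0$ yields $\sum_{l\in\set{L}} r\bs{l} \ge H(W\bs{\set{L}}\mid W\bs{\set{L}^c})$.

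In fact this lemma is essentially the cascaded/multiterminal Slepian--Wolf result of Wyner \emph{et al.}~\cite{Wyner-Jun-2002-A} specialised to lossless reconstruction at every node, so I would remark that it follows from known results and present the argument above compactly rather than in full detail, mirroring the paper's treatment of Lemma~\ref{Lem:ChannelCoding}. The main obstacle — and it is a mild one — is purely bookkeeping: carefully indexing the error events in the achievability union bound by the \emph{pattern} of coordinates on which the decoded tuple deviates, and checking that as the decoder index $l$ ranges over all nodes the induced family of constraints is exactly $\{\sum_{l\in\set{L}} r\bs{l}\ge H(W\bs{\set{L}}\mid W\bs{\set{L}^c}): \emptyset\neq\set{L}\subsetneq\{1,\ldots,L\}\}$, with no redundant or missing inequality. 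The typicality estimates themselves are entirely routine.
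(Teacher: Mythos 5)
Your overall framing is the same as the paper's: the paper disposes of this lemma by citing the Slepian--Wolf/Cover theorem (decoders with their own source as side information) and omits all details, and your achievability sketch --- independent random binning at each node, joint-typicality decoding at Node~$l$ with $\bm{W}\bs{l}$ fixed, and a union bound over deviation patterns $\set{L}\subseteq\{1,\ldots,L\}\setminus\{l\}$, which is exactly why only strict subsets appear --- is the standard and correct way to fill them in.

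The converse, however, contains a step that is false as written. You bound $I(\bm{W}\bs{\set{L}}; M\bs{1},\ldots,M\bs{L}\mid \bm{W}\bs{l^\star}) \le \sum_{l\in\set{L}} H(M\bs{l})$ by arguing that the indices $M\bs{l'}$, $l'\in\set{L}^c$, ``can be dropped'' because they are functions of $\bm{W}\bs{l'}$; but you have only conditioned on $\bm{W}\bs{l^\star}$, not on the other sources in $\set{L}^c$, and those indices carry information about $\bm{W}\bs{\set{L}}$ through the source correlation. Concretely, take $L=3$, $\set{L}=\{1\}$, $l^\star=3$, with $W\bs{1}=W\bs{2}$ uniform and independent of $W\bs{3}$, and a code in which Node~2 describes $\bm{W}\bs{2}$ at rate close to $1$ while $r\bs{1}\approx 0$: then the left-hand side is at least $I(\bm{W}\bs{1};M\bs{2}\mid\bm{W}\bs{3})\approx n$, while $\sum_{l\in\set{L}}H(M\bs{l})=H(M\bs{1})\approx 0$. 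The repair is the standard reordering: condition on all of $\bm{W}\bs{\set{L}^c}$ from the outset, so that every $M\bs{l'}$ with $l'\in\set{L}^c$ is determined by the conditioning. Namely,
\begin{equation*}
n\sum_{k\in\set{L}}(r\bs{k}+\epsilon)\;\ge\;\sum_{k\in\set{L}}H(M\bs{k})\;\ge\;H\big(M\bs{\set{L}}\,\big|\,\bm{W}\bs{\set{L}^c}\big)\;\ge\;I\big(\bm{W}\bs{\set{L}};M\bs{\set{L}}\,\big|\,\bm{W}\bs{\set{L}^c}\big)
\;=\;H\big(\bm{W}\bs{\set{L}}\,\big|\,\bm{W}\bs{\set{L}^c}\big)-H\big(\bm{W}\bs{\set{L}}\,\big|\,M\bs{\set{L}},\bm{W}\bs{\set{L}^c}\big),
\end{equation*}
and since $\big(M\bs{\set{L}},\bm{W}\bs{\set{L}^c}\big)$ determines $\big(M\bs{1},\ldots,M\bs{L},\bm{W}\bs{l^\star}\big)$, Fano's inequality at Node~$l^\star$ gives $H\big(\bm{W}\bs{\set{L}}\,\big|\,M\bs{\set{L}},\bm{W}\bs{\set{L}^c}\big)\le n\delta_n$; dividing by $n$ and using $H(\bm{W}\bs{\set{L}}\mid\bm{W}\bs{\set{L}^c})=nH(W\bs{\set{L}}\mid W\bs{\set{L}^c})$ yields the claimed inequality without ever invoking the invalid bound (your ``conditioning reduces entropy'' step then becomes unnecessary). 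With this fix your proof is complete and consistent with the result the paper attributes to Slepian--Wolf/Cover.
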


\medskip

\begin{proof}
The lemma is a simple consequence of the Slepian-Wolf/Cover theorem~\cite{Slepian-Jul-1973-A,Cover-Mar-1975-A}. The details are omitted.
\end{proof}

\subsection{Separate Source and Channel Coding}\label{Sec:Separation}

Reliable communication with separate source and channel codes is possible if the intersection of the interior of $\set{R}$ and the interior of $\set{C}$ is nonempty.

\medskip

\begin{theorem}\label{Thm:SSCC}
Reliable communication with separate source and channel codes is possible if there exists nonnegative rates $(r\bs{1},r\bs{2},\ldots,r\bs{L})$ such that
\begin{subequations}\label{Eqn:Separation}
\begin{equation}
H(W\bs{\set{L}}|W\bs{\set{L}^c}) < \sum_{k \in \set{L}} r\bs{k} < \sum_{l \in \set{L}} \Cu{k}
\end{equation}
holds for each nonempty and strict subset $\set{L}$ of $\{1,2,\ldots,L\}$ and
\begin{equation}
\sum_{k \neq l} r\bs{k} < I(X\bs{0};Y\bs{l})
\end{equation}
holds for all  $l$ in $\{1,2,\ldots,L\}$.
\end{subequations}
\end{theorem}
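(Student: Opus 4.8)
My plan is to read the hypothesis~\eqref{Eqn:Separation} as the assertion that the rate tuple $(r\bs{1},\ldots,r\bs{L})$ lies simultaneously in the interior of the source-coding region $\set{R}$ and in the interior of the channel capacity region $\set{C}=\set{C}^*$, and then to concatenate a source code supplied by Lemma~\ref{Lem:SourceCoding} with a channel code supplied by Lemma~\ref{Lem:ChannelCoding}. The membership in $\mathrm{int}\,\set{R}$ is immediate from Lemma~\ref{Lem:SourceCoding} and the left inequalities of~\eqref{Eqn:Separation}. For the membership in $\mathrm{int}\,\set{C}$, I would invoke Lemma~\ref{Lem:ChannelCoding}: condition~(ii) of $\set{C}^*$ is exactly the downlink part of~\eqref{Eqn:Separation}, and condition~(i), $r\bs{l}\le\Cu{l}$, is the singleton case $\set{L}=\{l\}$ of the right inequalities of~\eqref{Eqn:Separation} (this uses $L\ge 2$, so that $\{l\}$ is a strict subset; for $L=1$ the theorem is trivial since each node already holds its own source). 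In particular the non-singleton sum constraints $\sum_{k\in\set{L}}r\bs{k}<\sum_{k\in\set{L}}\Cu{k}$ are redundant, being implied by the singleton ones.

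With both memberships in hand, I would run the standard separation construction. Fix $\epsilon>0$. Because every inequality in~\eqref{Eqn:Separation} is strict and there are only finitely many subsets $\set{L}$, choose $\delta>0$ so small that $H(W\bs{\set{L}}|W\bs{\set{L}^c})<\sum_{k\in\set{L}}\bigl(r\bs{k}-\tfrac{3\delta}{4}\bigr)$ for every nonempty strict $\set{L}$, that $r\bs{l}-\tfrac{3\delta}{4}\ge 0$, that $r\bs{l}-\tfrac{\delta}{4}\le\Cu{l}$, and that $\sum_{k\neq l}\bigl(r\bs{k}-\tfrac{\delta}{4}\bigr)\le I(X\bs{0};Y\bs{l})$ for every $l$. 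Applying Lemma~\ref{Lem:SourceCoding} to $(r\bs{l}-\tfrac{3\delta}{4})_l\in\set{R}$ gives a source code with joint error probability at most $\epsilon/2$ and compression rates $\eta_\mathrm{s}\bs{l}\le r\bs{l}-\tfrac{\delta}{2}$, and applying Lemma~\ref{Lem:ChannelCoding} to $(r\bs{l}-\tfrac{\delta}{4})_l\in\set{C}$ gives a channel code with joint error probability at most $\epsilon/2$ and transmission rates $\eta\bs{l}\ge r\bs{l}-\tfrac{\delta}{2}$; I would take both at a common blocklength $n$. Then each source index set $\set{M}\bs{l}$ has cardinality at most $2^{n(r\bs{l}-\delta/2)}$, which is at most the cardinality of the corresponding channel message set, so the source indices embed injectively into the channel messages. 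The separate code sets $\phi\bs{l}=\phi\bs{l}_\mathrm{c}\circ\phi\bs{l}_\mathrm{s}$ (compress $\bm{W}\bs{l}$, relabel the index as a channel message, transmit) and has Node~$l$ first decode the channel message tuple and then apply $\psi\bs{l}_\mathrm{s}$. A union bound over the events ``the source code fails'' and ``the channel code fails'' yields $P_\mathrm{e}\le\epsilon$.

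I expect the main obstacle to be the plumbing between the two black boxes rather than any new idea, and there are two points to get right. First, \emph{blocklength matching}: Lemmas~\ref{Lem:SourceCoding} and~\ref{Lem:ChannelCoding} as stated only guarantee a good code at \emph{some} large blocklength, whereas the composite code needs the source and channel blocklengths to agree (matched symbol rates). This is harmless, since the random-binning and random-coding arguments behind those lemmas actually yield good codes at \emph{every} sufficiently large blocklength; failing that, one runs $\mathrm{lcm}(n_\mathrm{s},n_\mathrm{c})/n_\mathrm{s}$ copies of the source code against $\mathrm{lcm}(n_\mathrm{s},n_\mathrm{c})/n_\mathrm{c}$ copies of the channel code, fixing the per-copy error levels before counting the copies. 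Second, and more delicate, \emph{non-uniform channel inputs}: the channel code is now driven by the source indices, which are in general not uniformly distributed, while Lemma~\ref{Lem:ChannelCoding} controls only the \emph{average} error probability over \emph{uniform} messages. I would close this gap by keeping the channel codebook random until the end --- since each message is assigned an independent codeword, the codebook-averaged error for any fixed message tuple (hence for a source-induced one) equals the codebook-averaged error under uniform messages, so one codebook serves both purposes --- or, equivalently, by first expurgating the $L$ point-to-point uplink codes and the downlink code down to a small \emph{maximal} error probability over message tuples, at a rate cost that the $\delta$-margin absorbs. Making that expurgation respect the product structure of the uplink message sets is the fiddliest step; everything else is routine bookkeeping.
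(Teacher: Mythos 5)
Your proposal is correct and follows essentially the paper's own route: the paper defines separate source-channel achievability via a nonempty intersection of the interiors of $\set{R}$ and $\set{C}$ and then declares Theorem~\ref{Thm:SSCC} an immediate consequence of Lemmas~\ref{Lem:ChannelCoding} and~\ref{Lem:SourceCoding}, which is exactly your membership argument. The extra plumbing you supply (blocklength matching, handling non-uniform source indices via a random codebook or expurgation, and noting the redundancy of the non-singleton uplink sum constraints) is sound but is detail the paper leaves implicit rather than a different method.
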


\medskip

\begin{proof}
The theorem is an immediate consequence of Lemmas~\ref{Lem:ChannelCoding} and~\ref{Lem:SourceCoding}.
\end{proof}

\medskip

Separate source and channel coding is optimal for two nodes. Specifically, the achievability assertion of Theorem~\ref{Thm:SSCC} is equivalent to that of Theorem~\ref{Thm:JSCC}: if~\eqref{Eqn:JSCC} holds, then we can find $(r\bs{1},r\bs{2})$ simultaneously satisfying
\begin{align*}
H(W\bs{1}|W\bs{2}) &< r\bs{1} < \Cu{1}\\
H(W\bs{2}|W\bs{1}) &< r\bs{2} < \Cu{2}
\end{align*}
and
\begin{align*}
r\bs{1} &< I(X\bs{0};Y\bs{2})\\
r\bs{2} &< I(X\bs{0};Y\bs{1}).
\end{align*}

The situation is more complicated for three or more nodes. Indeed, the achievability assertion of Theorem~\ref{Thm:SSCC} is more restrictive than that of Theorem~\ref{Thm:JSCC} in general; in particular, it is not possible to simultaneously remove all redundancies in the sources for every node, and such redundancies can be exploited by the channel code. We describe such a situation in Appendix~\ref{App:Example}.

It should be noted that the capacity region $\set{C}$ is formulated with the requirement that each node reliably communicates a single message to every other node. A more general setup would permit the use of a \emph{private} message from each node to each subset of nodes. Characterising the resultant $L(2^{L-1}-1)$-dimensional capacity region appears to be a formidable task; in particular, the problem includes the setup of~\cite{Oechtering-Jul-2012-C} as a special case. Theorem~\ref{Thm:SSCC} should therefore be understood as a sufficient condition for separate source and channel coding to be optimal. Finally, it is interesting to juxtapose such difficulties to the relatively simple JSC coding scheme used to prove Theorem~\ref{Thm:JSCC}.


\section{Practical Codes}\label{Sec:Practical} 

We now consider the problem of designing practical, low complexity, codes that can approach those theoretical limits established for JSC coding in Sections~\ref{Sec:JSCC} and~\ref{Sec:SSCC}. Iterative error correction codes have been extensively investigated for JSC coding; for example, see \cite{Garcia-Frias-Sep-2001-A} on joint turbo decoding and estimation of hidden Markov sources, or \cite{Xu-May-2007-A} on distributed JSC coding of video. In this section, we present a JSC coding scheme for the two-way relay network that is based on low-density parity-check (LDPC) codes~\cite{Gallager-Jan-1962-A,MacKay-Mar-1999-A,Johnson-2010-B,Richardson-2008-B}. In particular, we consider multi-edge LDPC codes~\cite{Richardson-2008-B} that are widely used in applications such as wiretap and multi-relay channels \cite{Rathi-Nov-2009-C,Azmi-Nov-2011-A}. Our aim is to show how the individual codes for source and channel coding can be represented by a joint factor-graph \cite{Kschischang-Feb-2001-A}, and we use this graph to provide an alternative view of the separation principle using multi-edge density evolution.

In the following, we represent the channel coding message of Node $l$ (denoted by $M\bs{l}$ in Section~\ref{Sec:ChannelCoding}) using the binary notation $\mbf{B}\bs{l}$. As there are only two nodes, we will denote the node that is not Node $l$ as Node $\sim l$. 

The encoder at Node $l$ uses a linear source code to compress its source vector $\mbf{W}\bs{l}$ as
\begin{align*}
    \mbf{B}\bs{l} = \mbf{W}\bs{l}{\pchk_\text{s}\bs{l}}^{\mathrm{T}}.
\end{align*}
Here $\mbf{W}\bs{l}$ and $\mbf{B}\bs{l}$ are binary vectors of length $n$ and $r\bs{l}n$ respectively, and $\pchk_\text{s}\bs{l}$ is the parity-check matrix defining the source code of Node $l$.

The compressed vector $\mbf{B}\bs{l}$ is mapped to a channel codeword $\mbf{X}\bs{l}$, which is transmitted over the uplink channel. The uplink channel code is defined by a parity-check matrix $\pchk_\text{c}\bs{l}$\footnote{For more information on how to encode a message $\mbf{B}\bs{l}$ into a codeword $\mbf{X}\bs{l}$ for a particular code $\pchk_\text{c}\bs{l}$ see \cite[App. A]{Richardson-2008-B}.}. The relay decodes $\mbf{B}\bs{l}$ from the noisy outputs $\mbf{Y}\bs{0,l}$ of the $l$-th uplink channel. It then maps the concatenation of $\mbf{B}\bs{1}$ and $\mbf{B}\bs{2}$ to $\mbf{X}\bs{0}$  using a channel encoder defined by the parity-check matrix $\pchk_\text{c}\bs{0}$. Each node uses a channel decoder to recover the other user's index $\mbf{B}\bs{\sim l}$ using their own index $\mbf{B}\bs{l}$ and the noisy observation $\mbf{Y}\bs{l}$ from the broadcast channel. Finally, the source code is decoded separately resulting in an estimate of $\mbf{W}\bs{\sim l}$. This system is represented by the factor-graph in Fig.~\ref{fig:factorgraph}.

\begin{figure*}[t]
    \centering
    \includegraphics[width=1.1\columnwidth]{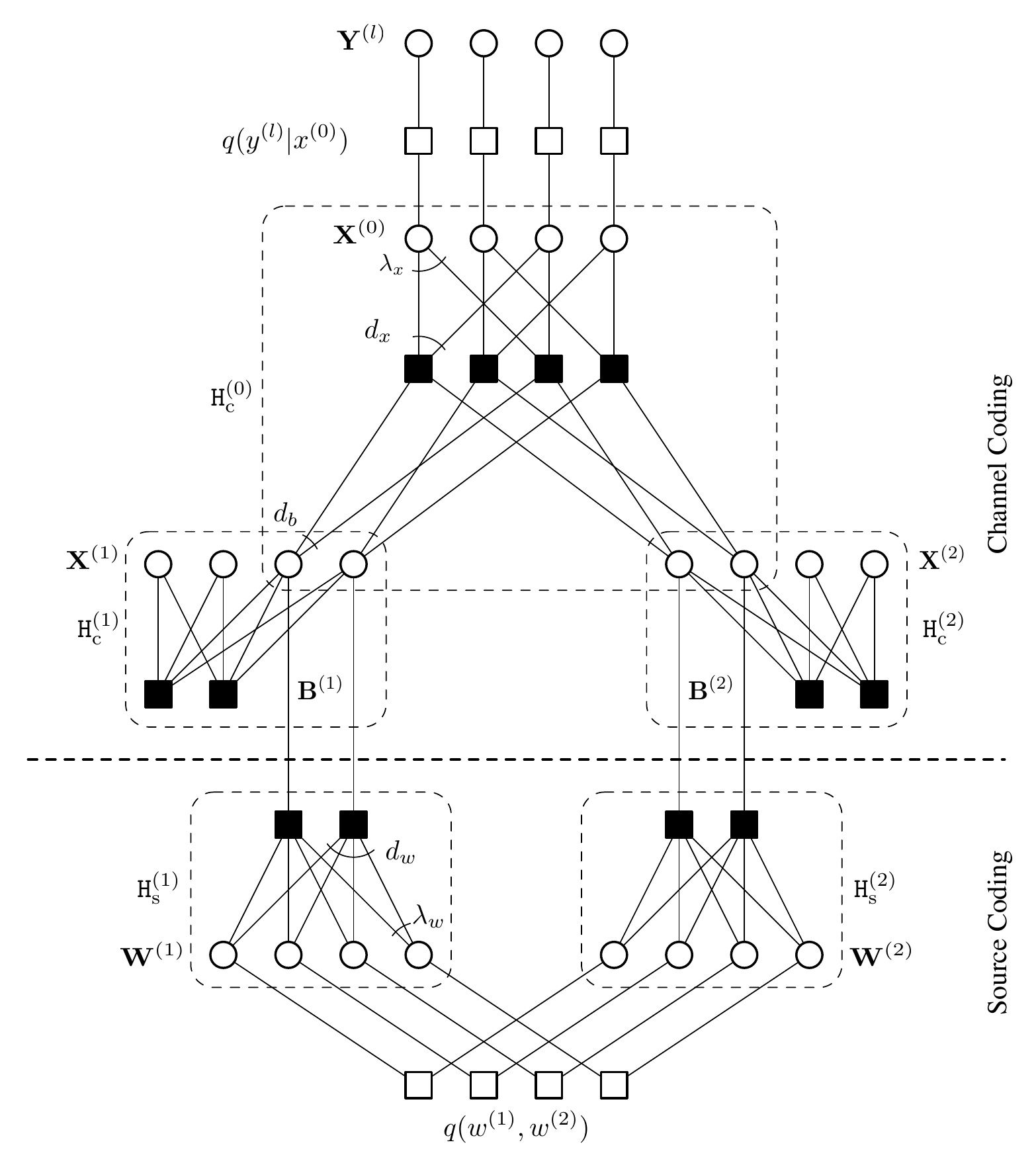}
    \caption{Factor-graph of the overall system. Circles represent (binary) symbols; parity-check equations are represented by solid black squares. The correlation between the symbols and the channel transition probability for the downlink are represented by empty squares.}
    \label{fig:factorgraph}
\end{figure*}

\subsection{Source-Channel Separation from a Factor-Graph Viewpoint} 
In Section~\ref{Sec:Separation}, we showed that separate source-channel coding is asymptotically (in blocklength) optimal in the two-way relay network. We now provide an alternative view of this result using the factor-graph representation in Fig.~\ref{fig:factorgraph}. Let us first assume that the relay successfully decodes each bin index -- the uplink code does not add to the discussion on separate versus joint decoding of the source and downlink codes. Consider the graph shown in Fig.~\ref{fig:factorgraph} depicting the source codes and downlink channel code; these codes are connected via the bin indices.

A separate source-channel decoder will apply the channel decoder to determine the other user's index $\mbf{B}\bs{\sim l}$ from $\mbf{Y}\bs{l}$ and then separately apply the source decoder to estimate $\mbf{W}\bs{\sim l}$ from $\mbf{B}\bs{\sim l}$. A joint source-channel decoder will decode $\mbf{W}\bs{\sim l}$ directly from $\mbf{Y}\bs{l}$ by decoding the source and channel codes on the joint factor graph, and exchanging soft (extrinsic) information between the two parts of the factor graph.

Consider now the case when the source is compressed with a rate that equals the conditional entropy. In the channel coding setting this corresponds to a capacity achieving code. It has been shown in \cite{Peleg-Nov-2006-C} that the extrinsic information about the coded bits of any good (capacity achieving) code is zero above capacity. This implies that a joint decoder cannot outperform a separate decoding scheme in such a setting since there is no extrinsic information available to the joint decoder. Therefore, source-channel separation with separate decoding can be seen to be optimal from a factor graph perspective if capacity achieving source codes are applied.

In the following we will investigate source-channel separation in a practical setting by designing source and channel codes for the two-way relay channel and considering their performance using both density evolution and finite length simulations.

\subsection{Design of the Source and Channel Codes}

The task of the source code is to map the message $\mbf{W}^{(l)}$ of length $n$ bits to a message $\mbf{B}^{(l)}$ of length $r\bs{l}n$ bits such that the other node can reconstruct the message using its own message as side information. This is a Slepian-Wolf coding problem \cite{Slepian-Jul-1973-A} with bin index $\mbf{B}^{(l)}$. An optimal Slepian-Wolf code can be realised by using the \emph{syndrome} of a linear code, which is optimised for a particular symmetric dual channel~\cite{Wyner-Jan-1974-A,Chen-Sep-2009-A}. The optimisation of the degree distribution of an LDPC code for a symmetric channel with uniform input is well studied~\cite{Richardson-Feb-2001-A,Richardson-2008-B} and we will design our source codes in this way.

For the downlink the relay has to communicate the messages $\mbf{B}\bs{1}$ and $\mbf{B}\bs{2}$ to both nodes simultaneously by broadcasting a codeword $\mbf{X}\bs{0}$. Our proposed coding scheme is as follows. First, the relay treats the concatenation of $\mbf{B}\bs{1}$ and $\mbf{B}\bs{2}$ as the systematic part of an LDPC code. It then uses an LDPC encoder to determine the message $\mbf{X}\bs{0}$ (the ``parity'' bits), which it transmits over the broadcast channel. At each node, the channel decoder knows its own message $\mbf{B}\bs{l}$ and treats the message of the other node as being erased.

The overall structure of the parity-check matrix $\pchk_\text{c}\bs{0}$ of our scheme is shown in Fig.~\ref{fig:bcmatrix}. This matrix consists of a concatenation of the matrices $\pchk\bs{1}$, $\pchk\bs{2}$ (corresponding to $\mbf{B}\bs{1}$, $\mbf{B}\bs{2}$), and the matrix $\pchk\bs{0}$ (corresponding to $\mbf{X}\bs{0}$).
\begin{figure}[t]
    \centering
    \includegraphics[width=1\columnwidth]{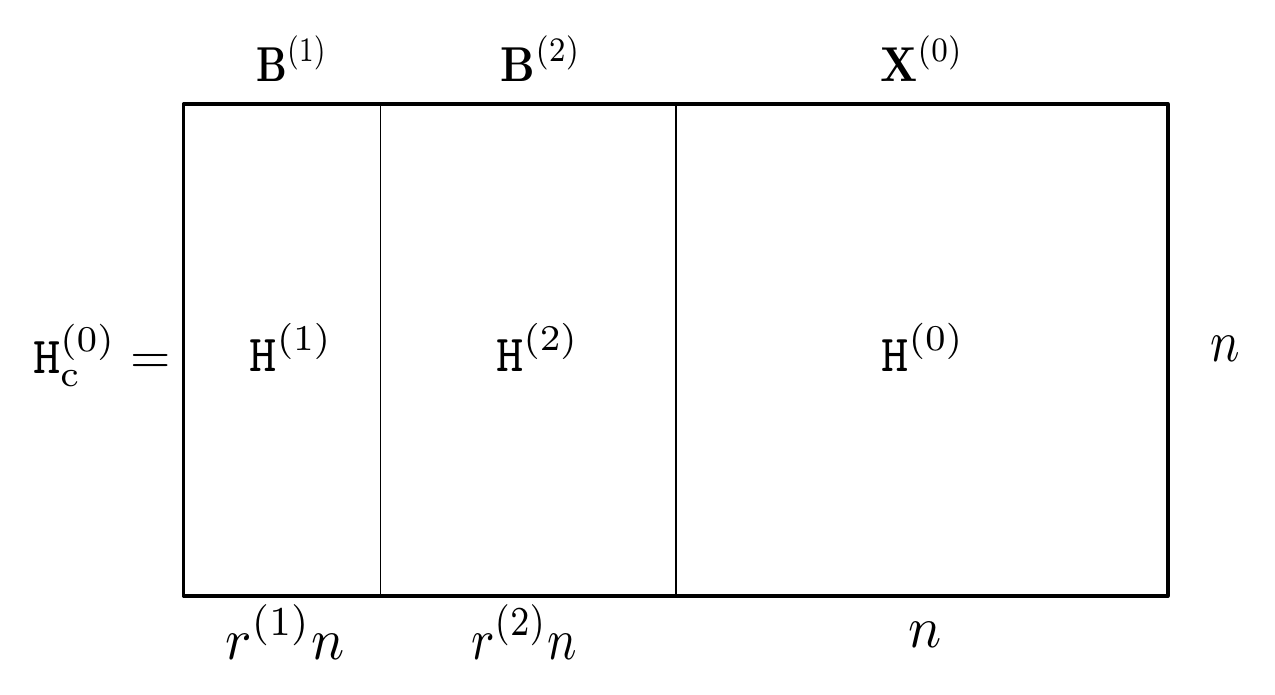}
    \caption{Structure of the parity-check matrix for the downlink channel code.}
    \label{fig:bcmatrix}
\end{figure}
The codebook used by the relay is therefore defined as
\begin{multline*}
    \set{C}\bs{0} = \Big\{[\mbf{b}\bs{1} \mbf{b}\bs{2} \mbf{x}\bs{0}] \in \{0,1\}^{(r\bs{1}n+r\bs{2}n+n)} \\
    : [\mbf{b}\bs{1} \mbf{b}\bs{2} \mbf{x}\bs{0}] {\pchk_\text{c}\bs{0}}^{\mathrm{T}} = \mbf{0} \Big\}.
\end{multline*}
The relay can determine its broadcast message $\mbf{X}\bs{0}$ by solving
\begin{align*}
    \mbf{B}\bs{1}{\pchk\bs{1}}^{\mathrm{T}} \oplus \mbf{B}\bs{2}{\pchk\bs{2}}^{\mathrm{T}} \oplus \mbf{X}\bs{0}{\pchk\bs{0}}^{\mathrm{T}} = \mbf{0},
\end{align*}
where $\oplus$ denotes the element-wise addition over GF(2). An efficient algorithm to determine $\mbf{X}\bs{0}$ is described in \cite[App. A]{Richardson-2008-B}.

To allow successful encoding at the relay and decoding at the nodes:
\begin{itemize}
\item the square matrix $\pchk\bs{0}$ has to be of full rank (over GF(2)) (this enables the relay to determine the vector $\mbf{X}\bs{0}$ given the messages $\mbf{B}\bs{1}$ and $\mbf{B}\bs{2}$),
\item the matrices $\pchk\bs{1}$ and $\pchk\bs{2}$ have to be free of stopping sets (a set of variable nodes that, if erased, cannot be resolved by a message-passing decoder \cite[Sec. 3.22]{Richardson-2008-B}).
\end{itemize}
The first condition is obvious since the relay cannot determine its broadcast message if the square matrix $\pchk\bs{0}$ is rank deficient. To show the necessity of the second condition, assume that the broadcast channel is noiseless, i.e., the nodes know $\mbf{X}\bs{0}$ without errors. The nodes now consider their own message $\mbf{B}\bs{l}$ as being known and solve for the other message $\mbf{B}\bs{\sim l}$ that is treated as being erased. If this set of erased variables contains a stopping set, then the iterative decoder is unable to solve for all erasures and gets stuck in the largest stopping set.\footnote{We note that this constraint only applies to the (sub-optimal) message passing decoder which we consider in this paper. For a maximum likelihood decoder we would have to require that the erased variable nodes do not contain the support of a codeword. The support of any codeword is a stopping set but not all stopping sets correspond to codewords. Therefore, the constraint for the message passing decoder is more restrictive than that for the maximum likelihood decoder.}

The absence of stopping sets can also be expressed by requiring that the sub-matrices $\pchk\bs{1}$ and $\pchk\bs{2}$ can be converted to triangular form using only row and column swaps. This corresponds to the encoding problem of LDPC codes as described in detail in \cite[App. A]{Richardson-2008-B} which allows us to apply their results. In particular, we use \cite[Thm. A.16]{Richardson-2008-B} to design LDPC codes which can be converted to triangular form.

An optimised code for the downlink phase has to be designed for both nodes simultaneously, i.e., it has to perform close to the respective theoretical limit for both downlink channels. Such a code can be analysed and optimised using multi-edge type density evolution \cite[Sec. 7]{Richardson-2008-B}.

\subsection{Density Evolution and Numerical Examples}
\label{sec:de}
We apply multi-edge type density evolution to analyse the performance of the coding structure introduced in the previous section. We make the following assumptions for our numerical examples:
\begin{itemize}
\item $(W\bs{1},W\bs{2})$ is a doubly symmetric binary source with cross-over probability $0 < \rho < 1/2$, i.e., 
\begin{equation*}
q_\text{s}(w\bs{1},w\bs{2}) = 
\left\{ 
\begin{array}{ll}
(1/2) \cdot(1-\rho) & \text{ if } w\bs{1} = w\bs{2}\\
(1/2) \cdot \rho &\text{ otherwise.} 
 \end{array}
\right.
\end{equation*}
The conditional entropy is therefore 
\begin{equation*}
H(W^{(1)}|W^{(2)})=H(W^{(2)}|W^{(1)})=h(\rho),
\end{equation*}
where $h(\cdot)$ denotes the binary entropy function.
\item The (orthogonal) uplink channels are binary input additive white Gaussian noise (BIAWGN) channels with noise variance $\sigma_\text{u}^{2}$ each.
\item The downlink channels are modelled as BIAWGN channels with noise variance $\sigma_\text{d}^{2}$ each.
\end{itemize}

Furthermore assume that the noise variance $\sigma_\text{u}^{2}$ of the uplink channel is fixed and the nodes can communicate their messages reliably to the relay. For the downlink we are interested in the pairs $(\rho, \sigma_\text{d}^{2})$ which define the achievable region
\begin{align}
\label{equ:region}
\set{D} \triangleq \left\{ (\rho,\sigma_\text{d}^{2})\in (0,1/2) \times \mathbb{R}_+ : \lim_{\ell \to \infty} P_{\rm DE}(\ell) < \epsilon \right\},
\end{align}
where $\epsilon > 0$ is an arbitrarily small constant and $P_{\rm DE}(\ell)$ denotes the bit error probability of the estimate $\mbf{\hat{W}}\bs{\sim l}$ after $\ell$ iterations of message passing decoding, i.e., $\set{D}$ consists of all pairs of source correlation $\rho$ and noise variance $\sigma_\text{d}^{2}$ on the downlink channel where the iterative decoder converges to an arbitrarily small error probability for sufficiently large block length.

Since we assume that the uplink channels can be decoded by the relay, we focus on the decoding problem at the nodes where we have the constraint:
\label{equ:con}
\begin{align}
h(\rho) \leq r\bs{l} \leq C_{\text{BIAWGN}}(\sigma_\text{d}^{2})\label{equ:conjoint}
\end{align}
where $C_{\text{BIAWGN}}(\sigma^{2})$ denotes the capacity of a BIAWGN channel with noise variance $\sigma^{2}$.

\begin{table*}
    \begin{center}
    \caption{Degrees and degree distributions of all codes used for the numerical examples. See Fig.~\ref{fig:factorgraph} for the definition of the degrees. \label{Table:LDPC_degrees}}
    \begin{tabular}{c c c c c c c c c c}
        \toprule
        \multicolumn{4}{c}{Source codes} & \multicolumn{6}{c}{Channel codes downlink}\\
        \multicolumn{4}{c}{} & \multicolumn{2}{c}{separate} & \multicolumn{2}{c}{separate} & \multicolumn{2}{c}{joint}\\
        \multicolumn{2}{c}{rate $1/2$} & \multicolumn{2}{c}{rate $1/4$} &
        \multicolumn{2}{c}{rate $1/2$} & \multicolumn{2}{c}{rate $1/4$} & \multicolumn{2}{c}{rate $1/2$}\\
        degree $i$  & $\lambda_{w,i}$ &     degree $i$  & $\lambda_{w,i}$ &
        degree $i$  & $\lambda_{x,i}$ &     degree $i$  & $\lambda_{x,i}$ &     degree $i$  & $\lambda_{x,i}$\\
        \cmidrule(lr){1-2}  \cmidrule(lr){3-4} \cmidrule(lr){5-6} \cmidrule(lr){7-8} \cmidrule(lr){9-10}
        $2$     & $0.1710$  & $2$     & $0.1046$  & $2$     & $0.3657$  & $2$     & $0.3503$  & $2$     & $0.5254$\\
        $3$     & $0.2075$  & $3$     & $0.1984$  & $3$     & $0.1203$  & $3$     & $0.0731$  & $3$     & $0.1612$\\
        $8$     & $0.0800$  & $5$     & $0.1189$  & $12$    & $0.0963$  & $5$     & $0.0161$  & $18$    & $0.1349$\\
        $9$     & $0.2657$  & $6$     & $0.0006$  & $13$    & $0.1797$  & $6$     & $0.2043$  & $19$    & $0.1785$\\
        $47$    & $0.1864$  & $9$     & $0.1597$  & $45$    & $0.0162$  & $19$    & $0.0761$  & $$      & $$\\
        $48$    & $0.0894$  & $10$    & $0.0616$  & $46$    & $0.2218$  & $20$    & $0.0370$  & $$      & $$\\
        $$      & $$        & $19$    & $0.0458$  & $$      & $$        & $33$    & $0.1991$  & $$      & $$\\
        $$      & $$        & $20$    & $0.0453$  & $$      & $$        & $34$    & $0.0440$  & $$      & $$\\
        $$      & $$        & $24$    & $0.1881$  & $$      & $$        & $$      & $$        & $$      & $$\\
        $$      & $$        & $25$    & $0.0770$  & $$      & $$        & $$      & $$        & $$      & $$\\
        \cmidrule(lr){1-2}  \cmidrule(lr){3-4} \cmidrule(lr){5-6} \cmidrule(lr){7-8} \cmidrule(lr){9-10}
        $d_{w}$ & $10$      & $d_{w}$ & $22$      & $d_{b}$ & $3$       & $d_{b}$ & $3$       & $d_{b}$ & $3$\\
        $$      & $$        & $$      & $$        & $d_{x}$ & $4$       & $d_{x}$ & $4$       & $d_{x}$ & $3$\\
        \bottomrule
        \end{tabular}
    \end{center}
\end{table*}
\medskip

\subsubsection{Example 1 (Separate Decoding)}

The decoder at Node $l$ first decodes the other node's bin index $\mbf{B}\bs{\sim l}$. After performing a hard decision on the bin index, the node decodes the Slepian-Wolf code $\pchk_{s}\bs{\sim l}$ using the other node's bin index $\mbf{B}\bs{\sim l}$ and its own message $\mbf{W}\bs{l}$ as side information.

In this example we optimised source and channel codes for two cases 
\begin{equation*}
r\bs{1}=r\bs{2}=1/2\quad  \text{and}\quad r\bs{1}=r\bs{2}=1/4.
\end{equation*}
The resulting source and channel codes are given in Table~\ref{Table:LDPC_degrees}. In Fig.~\ref{fig:regionsep} we have marked the optimised points for both rates (by a cross and a square respectively). We can see that the (density evolution) performance of these optimised codes is close to the theoretical limits. Note that due to practical constraints, for example we considered only LDPC codes with a maximum node degree of $50$, these codes are capacity approaching rather than capacity achieving and so their performance is still bounded away from capacity, as shown by the small gap to the (1/2,1/2) and (1/4,1/4) points respectively.

\medskip

\subsubsection{Example 2 (Decoding when the Source and Channel Vary)}

In this example we consider the codes from Example 1 when the capacity of the downlink channel is
higher and/or when the conditional entropy of the source is lower than the optimised values. In the case of separate source and channel coding this gives the rectangular regions shown in Fig.~\ref{fig:regionsep}.

We next consider the same source and channel codes we used in the rate-1/2 case in Example 1, but now apply a joint decoder. This is achieved by applying the sum-product algorithm \cite{Kschischang-Feb-2001-A} to the joint factor graph consisting of the downlink code and the Slepian-Wolf source code, i.e., the graph shown in Fig.~\ref{fig:factorgraph} assuming error-free uplink channels. The achievable regions for this joint decoder are shown in Fig.~\ref{fig:regionjoint} and compared to the separate decoder (solid and dashed lines, respectively).

When the conditional entropy of the source is less than the value for which the system was designed, the source code is of course no longer optimal. A separate decoder ignores the source when decoding the downlink channel code and therefore it cannot exploit any remaining redundancy after decoding the source. However, a decoder that \emph{jointly decodes} the downlink channel code and the source code can exploit this remaining redundancy. The achievable region of the joint decoder is thus larger than that of the separate decoder due to its improved performance when the conditional entropy of the source is less than 1/2 at the same time that the capacity of the downlink channel is less than 1/2.

The joint source-channel decoder can therefore be seen to be more robust to variations of the source and channel away from the design entropy and rate. So while separate decoding is optimal for the particular design rate/entropy (here $1/2$,$1/2$), joint decoding is more robust when the source and channel vary. Indeed, a close observation of Fig.~\ref{fig:regionjoint} reveals that in this particular example the joint decoding scheme can decode successfully at lower downlink capacities than the separate scheme even for the ($1/2$,$1/2$) point where the separate scheme is optimised. This is because the designed source code is not capacity achieving (see the comment in Example 1) so it is not compressing at the theoretical limit and there is still some remaining redundancy to be exploited by the joint decoder.

\medskip

\subsubsection{Example 3 (Code Optimisation for Joint Decoding)}

In the previous example we presented the achievable region of a joint decoder but we used the same codes as in Example 1, i.e., codes that have been optimised for separate decoding of a specific source. Now we use the same source code as above but instead of optimising the channel code for the downlink for one particular source we optimise it for a range of sources. In particular, we chose to minimise the area between the achievable region and the theoretical limit over the entire range of source entropies. The results of such an optimisation process are shown in Fig.~\ref{fig:regionjoint} (dash-dotted line). We observe that such an optimised scheme performs close to the limit over a wide range. However, this comes with a loss (at the rate 1/2 point) compared to a coding scheme which is optimised for that particular setting.

\medskip

\subsubsection{Example 4 (Finite Length Results)}

In addition to the asymptotic results of the previous examples we present finite length results. For this purpose we consider the codes of Examples 1 and 2, i.e., codes of rate $1/2$ which have been optimised for a separate decoder. Finite length codes have been constructed for source blocks of length $n=10^4$ using the progressive edge growth (PEG) algorithm \cite{Xiao-Yu-Jun-2002-C}. For the simulations shown in Fig.~\ref{fig:wer} we fixed the cross-over probability $\rho$ of the source and varied the signal-to-noise ratio ($E_s/N_0$) of the downlink channel (on the horizontal axis). We repeat this process for three separate sources ($\rho = 0.09, 0.07$ and $0.05$). Results are shown for a separate and joint decoder (solid and dashed lines, respectively).

First, consider the case where $\rho=0.09$. This cross-over probability is close to the threshold of the source code ($\rho_{\text{th}} = 0.1064$) which leads to a high probability of error of the finite length source code (at approximately $10^{-2}$). These errors are independent of whether a separate or joint decoder is used and are because the finite length source code is far from capacity achieving. However, since the source cross-over probability is still slightly below the threshold, a joint decoder can exploit the remaining redundancy and can decode at a slightly lower $E_s/N_0$ for the downlink channel at bit error rates above this error floor.

\begin{figure}[t]
    \centering
    \includegraphics[width=1\columnwidth]{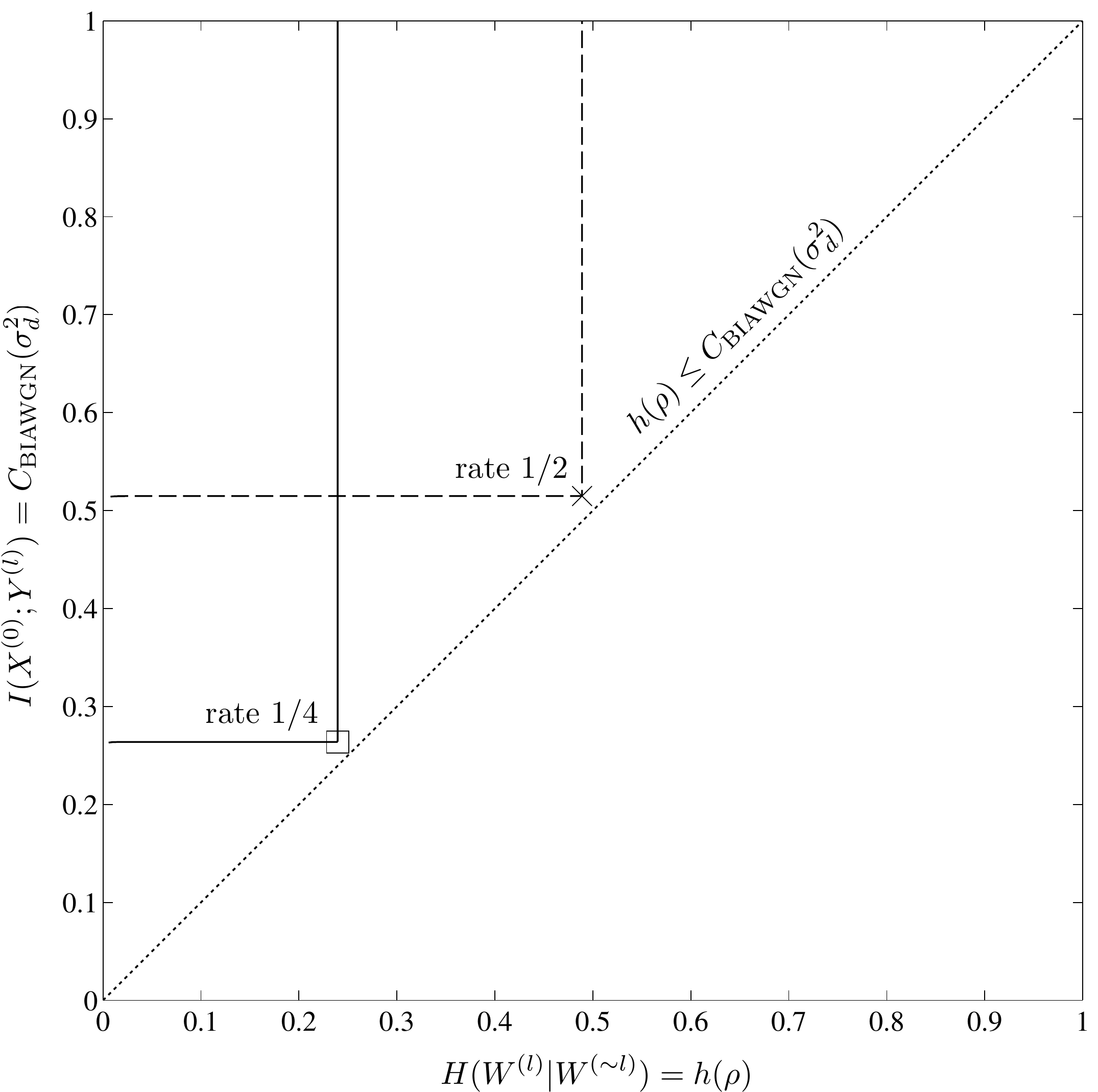}
    \caption{Achievable pairs of $\rho$ and $\sigma_{d}^{2}$ in terms of $h(\rho)$ and $C_{\text{BIAWGN}}(\sigma_{d}^{2})$ for rates $1/2$ (dashed line) and rates $1/4$ (solid line). The dotted line represents the theoretical limit \eqref{equ:conjoint}. The uplink from the nodes to the relay is assumed to be error-free.}
    \label{fig:regionsep}
\end{figure}

\begin{figure}[t]
    \centering
    \includegraphics[width=1\columnwidth]{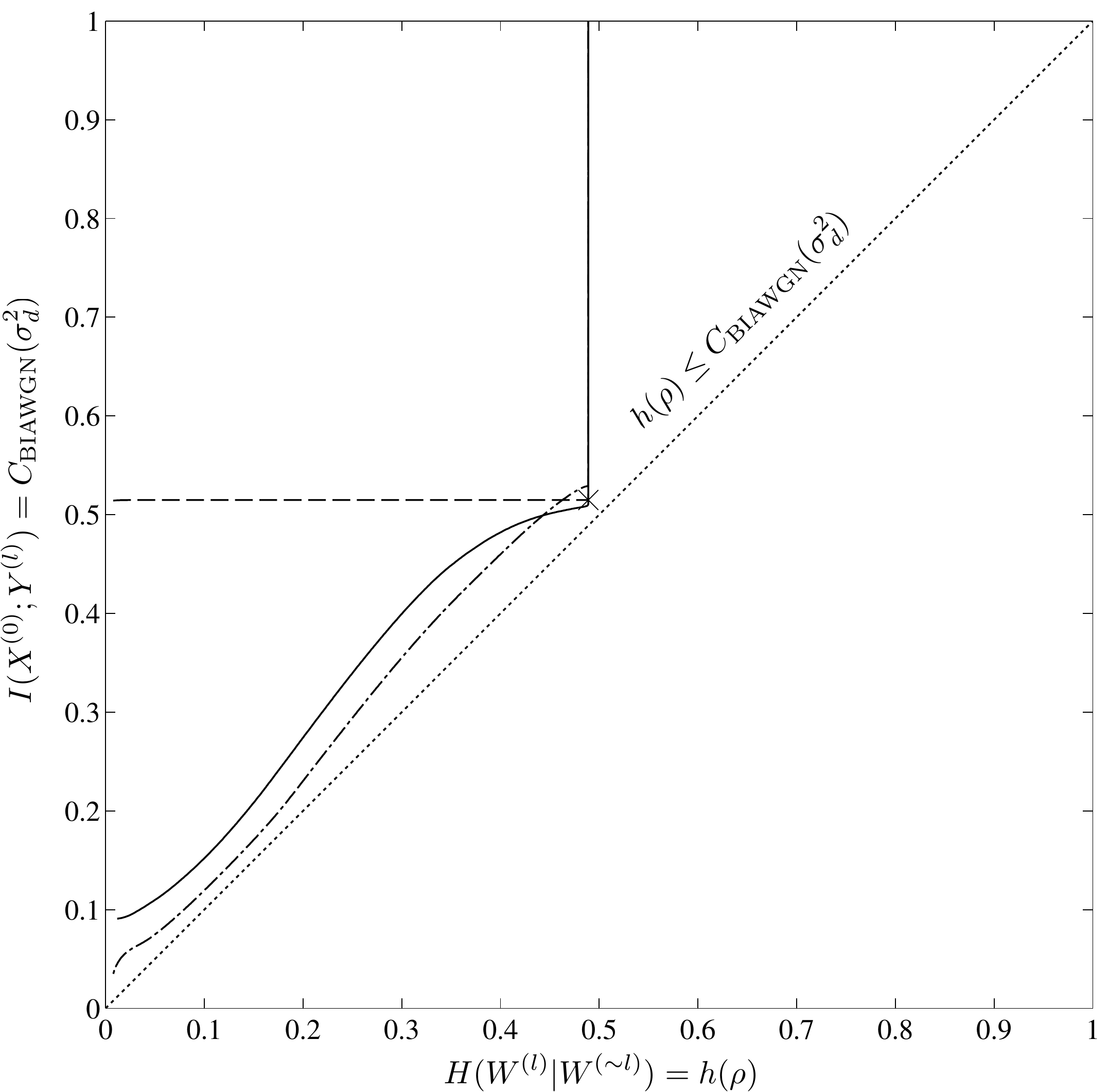}
    \caption{Achievable pairs of $\rho$ and $\sigma_{d}^{2}$ in terms of $h(\rho)$ and $C_{\text{BIAWGN}}(\sigma_{d}^{2})$ for rates $1/2$. The dashed line corresponds to a separate decoder (Example 1) and the solid line corresponds to a joint decoder operating on the code designed for a separate scheme (Example 2). The dash-dotted line represents a joint decoder where the channel code is optimised to perform well over a wide range of sources (Example 3).}
    \label{fig:regionjoint}
\end{figure}

The error floor caused by the source code is decreased when the sources have a smaller cross-over probability (i.e. the sources have a greater correlation). For example, lowering the cross-over probability to $\rho=0.07$ or $\rho=0.05$ shown in Fig.~\ref{fig:wer} leads to an error probability of the source code which is below our simulation range. In these cases a joint decoder will exploit the remaining redundancy and is able to decode at a significantly lower $E_s/N_0$ for the downlink channel than the separate decoder. The channel coding part of the separate decoder cannot benefit from a lower $\rho$ and so has the same downlink channel performance in all three cases.

This leads to the conclusion that, while source/channel separation is optimal in an asymptotic setting of infinite block length, any practical finite length system will benefit from a joint decoder.

\begin{figure*}[t]
    \centering
    \includegraphics[width=1.9\columnwidth]{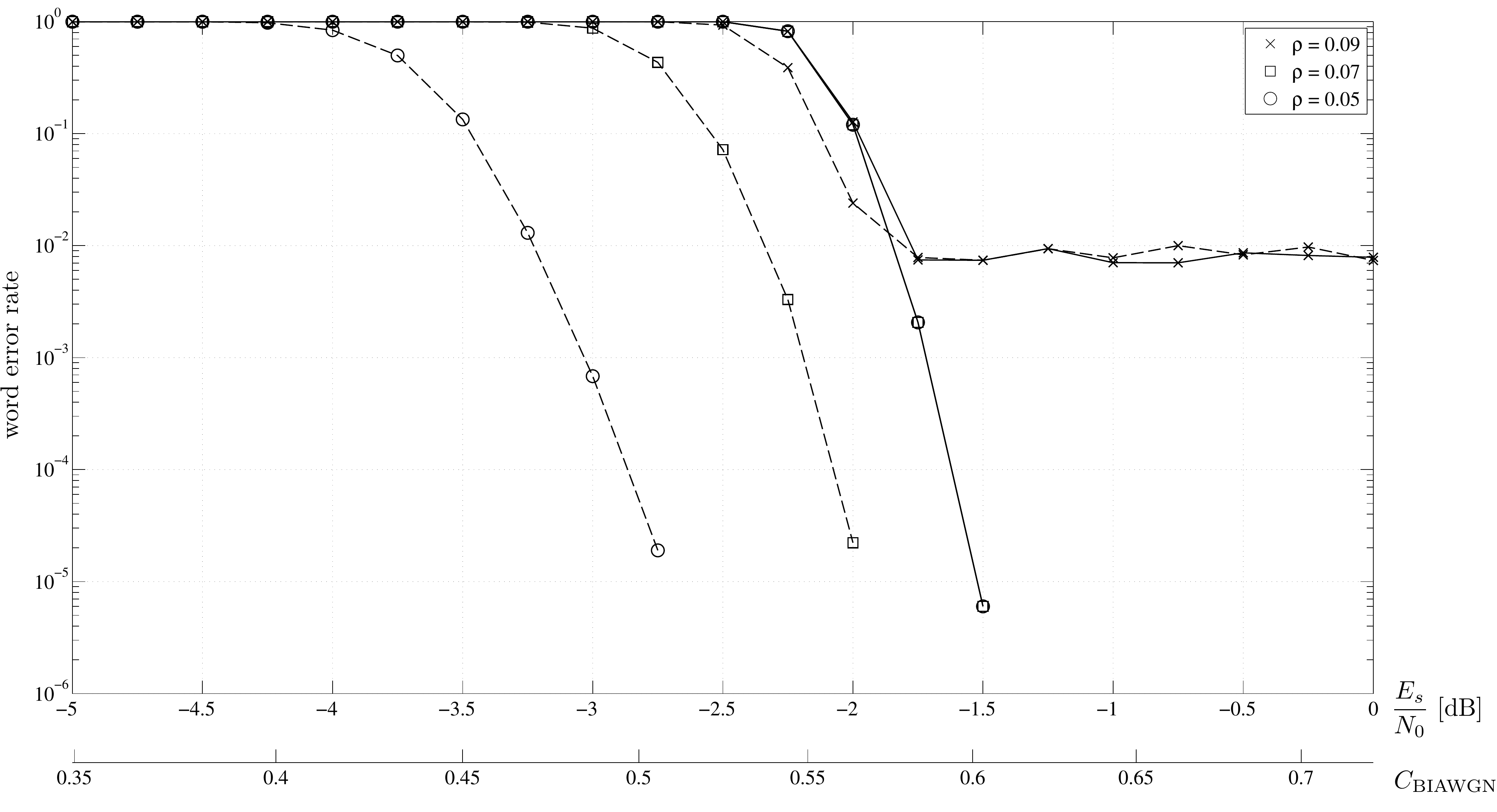}
    \caption{Word error rates of separate (solid lines) and joint decoding (dashed lines) as a function of the signal-to-noise ratio of the downlink channel. On the horizontal axis we also show the capacity of the downlink channel for easier comparison with the achievable regions.}
    \label{fig:wer}
\end{figure*}


\appendices

\section{Proof of Theorem~\ref{Thm:JSCC}}\label{App:ProofOfThm:JSCC}

In the usual way, we split the proof of Theorem~\ref{Thm:JSCC} into two parts: the \emph{achievability} assertion and the \emph{converse} assertion.

\subsection{Proof of Theorem~\ref{Thm:JSCC} (Achievability)}

\subsubsection{Proof Outline}

The main elements of the proof are best understood for the special case where the uplink channels are noiseless. Extending the proof to the more general case claimed in Theorem~\ref{Thm:JSCC} is straightforward.

The (noiseless) uplink random-coding argument uses a Slepian-Wolf/Cover distributed source code, e.g.~\cite[Sec.~10]{El-Gamal-2011-B}. The source sequence at each node is compressed to a \emph{bin} index that is sent to the Relay over the noiseless uplink channel. The $L$ compression rates of the source codes are selected to satisfy all but one inequality in the Slepian-Wolf Theorem~\cite[Thm.~10.3]{El-Gamal-2011-B} -- the exception being the total sum rate inequality, e.g.~\eqref{Eqn:HanExtraInequality} is omitted. The Relay has direct access to the $L$ bin indices; it does \emph{not} attempt to decode the individual source sequences. The (noisy) downlink random coding argument combines the virtual binning idea of Tuncel~\cite{Tuncel-Apr-2006-A} with the cascaded Slepian-Wolf binning idea of Wyner \emph{et al.}~\cite{Wyner-Jun-2002-A}. Each node will use a JSC-decoder to recover the source sequences.

\subsubsection{Preliminaries}

We use \emph{typical sequences} as, for example, defined in~\cite[Sec.~1]{Kramer-2008-A} and~\cite[Chap.~2]{El-Gamal-2011-B}. Suppose that $(A,B)$ are random variables on a discrete product space $\set{A} \times \set{B}$ with joint distribution $p_{AB}$. Let $\bm{\set{A}}$ and $\bm{\set{B}}$ denote the $n$-fold Cartesian products of $\set{A}$ and $\set{B}$ respectively. The \emph{type} of $\bm{a}$ in $\bm{\set{A}}$ and the \emph{joint type} of $(\bm{a},\bm{b})$ in $\bm{\set{A}} \times \bm{\set{B}}$ are the empirical distributions respectively defined by
\begin{align*}
\pi(a'| \bm{a}) \triangleq \frac{\big| \{i : a_i = a'\}\big|}{n},\ \ \ a' \in \set{A},
\end{align*}
and
\begin{equation*}
\pi( a',b' \big| \bm{a},\bm{b} ) \triangleq \frac{\big|\big\{i : (a_i,b_i) = (a',b') \big\} \big|}{n},\quad (a',b') \in \set{A} \times \set{B}.
\end{equation*}
Fix $\delta > 0$. The $\delta$-\emph{typical}, $\delta$-\emph{jointly-typical} and  $\delta$-\emph{conditionally-typical} sets are respectively defined by
\begin{equation*}
\set{T}_\delta(A) \triangleq \Big\{ \bm{a} \in \bm{\set{A}} : \Big|\pi(a' | \bm{a}) - p_A(a') \Big| \leq \delta p_A(a') ,\ \forall a' \in \set{A} \Big\},\\
\end{equation*}
\begin{align*}
\set{T}_\delta(A,B) \triangleq \Big\{ & (\bm{a},\bm{b}) \in \bm{\set{A}} \times \bm{\set{B}}  : \\
& \Big| \pi(a',b' | \bm{a},\bm{b}) - p_{AB}(a',b') \Big| \leq \delta p_{AB}(a',b'),\\
& \hspace{40mm} \forall (a',b') \in \set{A} \times \set{B} \Big\},
\end{align*}
and
\begin{equation*}
\set{T}_\delta(A,B|\bm{a}) \triangleq \Big\{\bm{b} \in \bm{\set{B}} : (\bm{a},\bm{b}) \in \set{T}_\delta(A,B) \Big\},
\end{equation*}
where $p_A$ denotes the $A$-marginal of $p_{AB}$. We note that if $(\bm{a},\bm{b})$ is in $\set{T}_{\delta}(A,B)$, then $\bm{a}$ is in $\set{T}_{\delta}(A)$ and $\bm{b}$ is in $\set{T}_{\delta}(B)$. The next lemma will be used throughout the proof.

\medskip

\begin{lemma}[Thms.~1.2 \& 1.3,~\cite{Kramer-2008-A}]
\label{Lem:CondtionallyTypical}
Fix
\begin{equation*}
0 \leq \delta_1 < \delta_2 \leq \min_{p_{AB}(a,b) \in \text{ support}(p_{AB})} p_{AB}(a,b).
\end{equation*}
If $\bm{a}$ belongs to $\set{T}_{\delta_1}(A)$, then the cardinality of the conditionally typical set $\set{T}_{\delta_2}(A,B|\bm{a})$ satisfies
\begin{equation*}
\big| \set{T}_{\delta_2}(A,B|\bm{a}) \big| \leq 2^{n H(B|A)(1+\delta_2)}.
\end{equation*}
Moreover, the probability that 
\begin{equation*}
\bm{B} \triangleq B_1,B_2,\ldots,B_n
\end{equation*}
(drawn i.i.d. with the $B$-marginal of $p_{AB}$) belongs to the conditionally-typical set satisfies
\begin{equation*}
\mathbb{P}\big[\bm{B} \in \set{T}_{\delta_2}(A,B|\bm{a}) \big] \leq 2^{-n(I(A;B) - 2 \delta_2 H(B))}.
\end{equation*}
\end{lemma}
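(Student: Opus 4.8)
This lemma is the standard pair of bounds on ($\delta$-)typical sets, so the plan is simply to reproduce the usual type-counting argument adapted to the notion of typicality fixed in the Preliminaries. The one piece of setup worth doing carefully is to use the hypothesis $\delta_1 < \delta_2 \le \min_{(a,b)\in\mathrm{supp}(p_{AB})} p_{AB}(a,b)$: it guarantees that every symbol, and every pair of symbols, that occurs with positive probability actually appears (with roughly the right frequency) in any typical sequence, so that the conditional probabilities $p_{B|A}(b'\mid a')$ used below are well defined and strictly positive whenever they carry a positive exponent.

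For the cardinality bound, I would start from the identity, valid for any $\bm{b}$ with $(\bm{a},\bm{b})\in\set{T}_{\delta_2}(A,B)$,
\[ p_{B|A}^{\,n}(\bm{b}\mid\bm{a}) \;=\; \prod_{(a',b')\in\set{A}\times\set{B}} p_{B|A}(b'\mid a')^{\,n\,\pi(a',b'\mid\bm{a},\bm{b})}. \]
Taking $\log_2$, using $|\pi(a',b'\mid\bm{a},\bm{b}) - p_{AB}(a',b')|\le \delta_2\,p_{AB}(a',b')$ together with the fact that each $\log_2 p_{B|A}(b'\mid a')\le 0$ (so the frequency deviation, once multiplied by these nonpositive numbers, can only help), one obtains $\log_2 p_{B|A}^{\,n}(\bm{b}\mid\bm{a})\ge -n(1+\delta_2)H(B\mid A)$, i.e. $p_{B|A}^{\,n}(\bm{b}\mid\bm{a})\ge 2^{-n(1+\delta_2)H(B\mid A)}$. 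Since $\sum_{\bm{b}\in\set{T}_{\delta_2}(A,B|\bm{a})} p_{B|A}^{\,n}(\bm{b}\mid\bm{a})\le 1$, dividing through yields $\big|\set{T}_{\delta_2}(A,B|\bm{a})\big|\le 2^{n(1+\delta_2)H(B\mid A)}$, which is the first claim.

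For the probability bound I would use the remark recorded just before the lemma: $(\bm{a},\bm{b})\in\set{T}_{\delta_2}(A,B)$ forces $\bm{b}\in\set{T}_{\delta_2}(B)$. Running the same $\log_2$-of-a-product estimate on the $B$-marginal then gives $p_B^{\,n}(\bm{b})\le 2^{-n(1-\delta_2)H(B)}$ for every $\bm{b}$ in the conditionally typical set. Summing this over $\set{T}_{\delta_2}(A,B|\bm{a})$ and inserting the cardinality bound just proved gives
\[ \mathbb{P}\big[\bm{B}\in\set{T}_{\delta_2}(A,B|\bm{a})\big] \;\le\; 2^{\,n\left[(1+\delta_2)H(B\mid A)-(1-\delta_2)H(B)\right]}, \]
and the exponent equals $-n\big(I(A;B)-\delta_2(H(B\mid A)+H(B))\big)\le -n\big(I(A;B)-2\delta_2 H(B)\big)$ since $H(B\mid A)\le H(B)$, which is exactly the claimed bound. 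There is no genuinely hard step here; the only points to watch are keeping the direction of the inequalities straight when multiplying the frequency-deviation estimates by the nonpositive quantities $\log_2 p_{B|A}(b'\mid a')$ and $\log_2 p_B(b')$, and invoking the support hypothesis on $\delta_2$ so that no such logarithm is $-\infty$ while carrying a positive exponent.
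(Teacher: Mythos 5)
Your proof is correct: the cardinality bound via the lower bound $p_{B|A}^{\,n}(\bm{b}\mid\bm{a})\ge 2^{-n(1+\delta_2)H(B|A)}$, the marginal bound $p_B^{\,n}(\bm{b})\le 2^{-n(1-\delta_2)H(B)}$, and the final estimate $\delta_2(H(B|A)+H(B))\le 2\delta_2 H(B)$ all go in the right direction. The paper itself gives no proof of this lemma---it is imported verbatim from Kramer's Theorems 1.2 and 1.3---and your argument is exactly the standard type-counting proof underlying those cited results, so there is nothing to reconcile.
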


\medskip

\subsubsection{Uplink Code Construction}\label{Sec:RandomCoding-Uplink}

Consider Node $l$. Randomly partition the source space $\bm{\set{W}}\bs{l}$ into $2^{n r\bs{l}}$ bins, labelled as $\{ \set{B}\bs{l}_1,\set{B}\bs{l}_2,\ldots,\set{B}\bs{l}_{2^{n r\bs{l}}}\}$, using an i.i.d. uniform law; i.e.,
\begin{equation*}
\mathbb{P}\big[ \bm{w}\bs{l} \in \set{B}\bs{l}_{b\bs{l}} \big] = \frac{1}{2^{n r\bs{l}}}, \quad b\bs{l} \in \big\{ 1,2,\ldots,2^{nr\bs{l}}\big\}.
\end{equation*}
The allowable values of $r\bs{l}$ will be specified later. With a slight abuse of notation, let
\begin{equation*}
\phi\bs{l} : \bm{w}\bs{l} \mapsto b\bs{l}
\end{equation*}
denote the map from source sequences to bin indices. Node~$l$ observes a sequence $\bm{w}\bs{l}$ from $\bm{\set{W}}\bs{l}$, and it sends the corresponding bin index $b\bs{l} \triangleq \phi\bs{l}(\bm{w}\bs{l})$ to the Relay.

\medskip

\subsubsection{Downlink Code}\label{Sec:RandomCoding-Downlink}

Consider the Cartesian product set of bin indices,
\begin{equation*}
\bm{\set{B}} \triangleq \bigotimes_{l=1}^L \big\{1,2,\ldots,2^{n r\bs{l}} \big\}.
\end{equation*}
For each bin tuple $\bm{b} = (b\bs{1},b\bs{2},\ldots,b\bs{L})$ in $\bm{\set{B}}$, generate a downlink codeword
\begin{equation*}
\bm{x}\bs{0}(\bm{b}) = \big(x\bs{0}_1(\bm{b}),x\bs{0}_2(\bm{b}),\ldots,x\bs{0}_{n}(\bm{b})\big)
\end{equation*}
by randomly drawing $n$-symbols from $\set{X}\bs{0}$ using the marginal distribution of $X\bs{0}$. The Relay observes a bin tuple $\bm{b} \triangleq \big(b\bs{1},b\bs{2},\ldots,b\bs{L} \big)$ from $\bm{\set{B}}$ on the uplink, and it sends $\bm{x}\bs{0}(\bm{b})$ over the downlink broadcast channel.

\medskip

\begin{figure*}
\begin{multline}\label{Eqn:Qset}
\set{Q}\bs{1}(\bm{w}\bs{1}) \triangleq \Big\{ (b\bs{2},b\bs{3},\ldots,b\bs{L}) \text{ for which there exists a unique tuple } (\tilde{\bm{w}}\bs{2},\tilde{\bm{w}}\bs{3},\ldots,\tilde{\bm{w}}\bs{L}) \\
\hspace{60mm} \text{ such that } \tilde{\bm{w}}\bs{l} \text{ belongs to } \set{B}\bs{l}_{b\bs{l}},\ \forall l = 2,3,\ldots,L, \text{ and }\\
(\tilde{\bm{w}}\bs{2},\tilde{\bm{w}}\bs{3},\ldots,\tilde{\bm{w}}\bs{L}) \in \set{T}_{\delta}(W\bs{1},\ldots,W\bs{L}|\bm{w}\bs{1}) \Big\}.
\end{multline}
\end{figure*}

\subsubsection{Decoding}\label{Sec:RandomCoding-Decoding}
Let $q_\text{s}$ denote the joint distribution of the source $(W\bs{1},W\bs{2},\ldots,W\bs{L})$, and let
\begin{equation*}
\bm{q}_\text{s}(\bm{w}\bs{1},\bm{w}\bs{2},\ldots,\bm{w}\bs{L}) \triangleq \prod_{i=1}^n q_\text{s}(w_i\bs{1},w_i\bs{2},\ldots,w_i\bs{L}).
\end{equation*}
Fix $0 < \delta \leq \mu$, where $\mu$ is the smallest value in the support sets of $q_\text{s}$ and the joint distribution $(X\bs{0},Y\bs{1})$.

The decoding procedure used at each node is identical; we describe the procedure for Node~1. The node observes a source sequence $\bm{w}\bs{1}$ from $\bm{\set{W}}\bs{1}$, it computes the corresponding bin index $b\bs{1} \equiv \phi\bs{1}(\bm{w}\bs{1})$, and it observes a channel output $\bm{y}\bs{1}$ from $\bm{\set{Y}}\bs{1}$. The node compiles a list of bin tuples that contain unique source sequences $\delta$-jointly typical with $\bm{w}\bs{1}$; to this end, for each $\bm{w}\bs{1}$ in $\bm{\set{W}}\bs{1}$ let $\set{Q}\bs{1}(\bm{w}\bs{1})$ be defined as in~\eqref{Eqn:Qset}.

In addition, the node looks for a \emph{unique} tuple $(\tilde{b}\bs{2},\tilde{b}\bs{3},$ $\ldots, \tilde{b}\bs{L})$ in $\set{Q}\bs{1}(\bm{w}\bs{1})$ such that the corresponding broadcast channel codeword is $\delta$-jointly typical with the observed channel output, i.e.,
\begin{equation}
\bm{x}\bs{0}(b\bs{1},\tilde{b}\bs{2}, \tilde{b}\bs{3}, \ldots, \tilde{b}\bs{L}) \in \set{T}_{\delta}(X\bs{0},Y\bs{1}|\bm{y}\bs{1}).
\end{equation}
If successful, Node~1 sets $(\bm{\hat{w}}\bs{1,1},\bm{\hat{w}}\bs{1,2},\bm{\hat{w}}\bs{1,3},\ldots,\bm{\hat{w}}\bs{1,L})$ equal to $(\bm{w}\bs{1},\bm{\tilde{w}}\bs{2},\bm{\tilde{w}}\bs{3},\ldots,\bm{\tilde{w}}\bs{L})$, where $(\bm{\tilde{w}}\bs{2},\bm{\tilde{w}}\bs{3},\ldots,$ $\bm{\tilde{w}}\bs{L})$ is the unique vector identified by $\set{Q}\bs{1}(\bm{w}\bs{1})$. If unsuccessful, the node declares an error.

\medskip

\emph{Remarks:}
\begin{itemize}
\item \emph{Noisy uplink:} At the end of the proof, we will adapt the uplink code for noiseless channels to include noisy channels by simply adding a good point-to-point channel code for each orthogonal uplink.
\item \emph{Separation:} Each node uses standalone source (and, later, channel) encoders on the uplink. Similarly, the Relay uses standalone channel decoders on the uplink and a standalone channel encoder on the downlink. However, the decoder at each node is a true JSC-decoder; for example, Node~$1$ first exploits its source side information $\bm{w}\bs{1}$ to compile the list $\set{Q}\bs{1}(\bm{w}\bs{1})$, before decoding the transmitted codeword from $\set{Q}\bs{1}(\bm{w}\bs{1})$ using $\bm{y}\bs{1}$. It can be suboptimal to decode the transmitted codeword $\bm{x}\bs{0}(b\bs{1},b\bs{2},\ldots,b\bs{L})$ using only the channel output $\bm{y}\bs{1}$, which, for example, is the case in separate source-channel coding.
\end{itemize}

\medskip

\subsubsection{Error Analysis}\label{Sec:RandomCoding-Analysis}

We wish to upper bound the average joint error probability $P_\text{e}$, as defined in~\eqref{Eqn:JointErrorProb}, for the described ensemble of codes. By the union bound for probability, we have
\begin{multline}\label{Eqn:AverageProbabilityOfError}
P_\text{e} \leq \sum_{l=1}^L \mathbb{P} \Big[ \big(\bm{\hat{W}}\bs{l,1},\bm{\hat{W}}\bs{l,2},\ldots,\bm{\hat{W}}\bs{l,L}\big)\\
 \neq (\bm{W}\bs{1},\bm{W}\bs{2},\ldots,\bm{W}\bs{L}\big)\Big].
\end{multline}
Consider the first error event of the sum~\eqref{Eqn:AverageProbabilityOfError}:
\begin{multline*}
\set{E} \triangleq \{ (\bm{\hat{W}}\bs{1,1},\bm{\hat{W}}\bs{1,2},\ldots,\bm{\hat{W}}\bs{1,L}) \\
\neq (\bm{W}\bs{1},\bm{W}\bs{2},\ldots,\bm{W}\bs{L})\};
\end{multline*}
i.e., the event that Node~1 decodes one or more source sequences in error. We now given an upper bound for $\mathbb{P}[\set{E}]$. Fix
\begin{equation*}
0 \leq \delta' < \delta
\end{equation*}
arbitrarily. Consider the following events.
\begin{enumerate}
\item[(i)] The event that the source sequences are not $\delta'$-jointly typical:
\begin{multline*}
\set{E}_1 \triangleq \big\{ \big(\bm{W}\bs{1},\bm{W}\bs{2},\ldots,\bm{W}\bs{L} \big) \\
\notin \set{T}_{\delta'}(W\bs{1},W\bs{2},\ldots,W\bs{L}) \big\}.
\end{multline*}
\item[(ii)] For each nonempty subset  
\begin{equation*}
\set{L} = \{l_1,l_2,\ldots,l_{|\set{L}|}\}
\end{equation*}
of $\{2,3,\ldots,L\}$, define the following event: the source sequences are $\delta'$-jointly typical and there exists an $|\set{L}|$-tuple of $\delta$-conditionally typical sequences, say
\begin{equation*}
\bm{\tilde{w}}\bs{\set{L}} = (\bm{w}\bs{l_1},\bm{w}\bs{l_2},\ldots,\bm{w}\bs{l_{|\set{L}|}}),
\end{equation*}
in the same bins as the source:
\begin{multline*}
\set{E}_{2,\set{L}} \triangleq \set{E}_1^c \cap
\Big\{ \exists \bm{\tilde{w}}\bs{\set{L}} \in \set{T}_{\delta}(W\bs{2},\ldots,W\bs{L}|\bm{W}\bs{1},\bm{W}\bs{\set{L}^c}) \\
\text{ such that } \bm{\tilde{w}}\bs{l} \neq \bm{W}\bs{l},\ \phi\bs{l}(\bm{\tilde{w}}\bs{l}) = \phi\bs{l}(\bm{W}\bs{l}), \ \forall l \in \set{L} \Big\}.
\end{multline*}
\item[(iii)] The event that the broadcast channel input and output at Node~1 are not $\delta'$-jointly typical:
\begin{multline*}
\set{E}_3 \triangleq \big\{ \big(\bm{X}\bs{0}(B\bs{1},B\bs{2},\ldots,B\bs{L}),\bm{Y}\bs{1} \big) \\
\notin \set{T}_{\delta'}(X\bs{0},Y\bs{1}) \big\}.
\end{multline*}
\item[(iv)] The source sequences are $\delta'$-jointly typical, the channel input and output are $\delta'$-jointly typical, and there exists another $\delta$-conditionally typical codeword with bin indices in $\set{Q}\bs{1}(\bm{W})$:
\begin{multline*}
\set{E}_4 \triangleq \set{E}_1^c \cap \set{E}_3^c \cap \Big\{ \exists (\tilde{b}\bs{2},\tilde{b}\bs{3},\ldots,\tilde{b}\bs{L})\in \set{Q}\bs{1}(\bm{W}\bs{1}) \\
\text{ such that } (\tilde{b}\bs{2},\tilde{b}\bs{3},\ldots,\tilde{b}\bs{L}) \neq (B\bs{2},B\bs{3},\ldots,B\bs{L})\\
\text{ and }  \bm{x}\bs{0}(B\bs{1},\tilde{b}\bs{2},\tilde{b}\bs{3},\ldots,\tilde{b}\bs{L}) \in \set{T}_{\delta}(X\bs{0},Y\bs{1}|\bm{Y}\bs{1})\Big\}.
\end{multline*}
\end{enumerate}
The error event $\set{E}$  --- the event that Node~1 decodes a source sequence in error --- is a subset of the union of $\set{E}_1$, $\cup \set{E}_{2,\set{L}}$, $\set{E}_3$ and $\set{E}_4$; hence,
\begin{equation}\label{Eqn:UnionBoundforE}
\mathbb{P}\big[\set{E}\big] \leq \mathbb{P}\big[\set{E}_1\big] + \sum_{\set{L} \subseteq \{2,3,\ldots,L\}}  \mathbb{P}\big[\set{E}_{2,\set{L}} \big] + \mathbb{P}\big[\set{E}_3\big] + \mathbb{P}\big[\set{E}_4\big].
\end{equation}
It follows from the \emph{law of large numbers} that
\begin{equation*}
\lim_{n\rightarrow\infty} \mathbb{P}\big[\set{E}_1\big] = 0\quad \text{ and }\quad \lim_{n\rightarrow\infty} \mathbb{P}\big[\set{E}_3\big] = 0
\end{equation*}
e.g., see~\cite[Sec.~2]{El-Gamal-2011-B} or~\cite[Thm.~1.1]{Kramer-2008-A}.

\begin{figure*}
\begin{align}
\notag
\mathbb{P}\big[\set{E}_{2,\set{L}} \big]
&\stackrel{}{=} \sum_{(\alpha)} \bm{q}_{\text{s}}(\bm{w}\bs{1},\bm{w}\bs{2},\ldots,\bm{w}\bs{L})
\indicator{(\bm{w}\bs{1},\bm{w}\bs{2},\ldots,\bm{w}\bs{L}) \in \set{T}_{\delta'}(W\bs{1},W\bs{2},\ldots,W\bs{L})}
\\
\notag
&\hspace{75mm} \cdot \sum_{(\beta)}
\mathbb{P}\big[ \phi\bs{l}(\bm{\tilde{w}}\bs{l}) = \phi\bs{l}(\bm{w}\bs{l})\ \forall l \in \set{L} \big]\\
\notag
&\stackrel{}{=} \sum_{(\gamma)} \bm{q}_{\text{s}}(\bm{w}\bs{1},\bm{w}\bs{2},\ldots,\bm{w}\bs{L})
\sum_{(\beta)}
\mathbb{P}\big[ \phi\bs{l}(\bm{\tilde{w}}\bs{l}) = \phi\bs{l}(\bm{w}\bs{l})\ \forall l \in \set{L} \big]\\
\notag
&\step{a}{=} \sum_{(\gamma)} \bm{q}_{\text{s}}(\bm{w}\bs{1},\bm{w}\bs{2},\ldots,\bm{w}\bs{L})
\sum_{(\beta)}
2^{-n \sum_{l \in \set{L}} r\bs{l}}\\
\notag
&\step{b}{\leq} \sum_{(\gamma)} \bm{q}_{\text{s}}(\bm{w}\bs{1},\bm{w}\bs{2},\ldots,\bm{w}\bs{L})\
2^{n H(W\bs{\set{L}}|W\bs{1},W\bs{\set{L}^c})(1+\delta)}\ 2^{-n \sum_{l \in \set{L}} r\bs{l}}\\
\label{Eqn:Bound2El}
&\leq 2^{n H(W\bs{\set{L}}|W\bs{1},W\bs{\set{L}^c})(1+\delta)} \ 2^{-n \sum_{l \in \set{L}} r\bs{l}},
\end{align}
\end{figure*}

The probability of each $\set{E}_{2,\set{L}}$ can be upper bound as shown in~\eqref{Eqn:Bound2El}, where
\begin{itemize}

\item $\indicator{\cdot}$ denotes the indicator function.

\item The sum marked with $(\alpha)$ is taken over all 
\begin{equation*}
(\bm{w}\bs{1},\bm{w}\bs{2},\ldots,\bm{w}\bs{L}) \in \bm{\set{W}}\bs{1} \times \bm{\set{W}}\bs{2} \times \cdots \times \bm{\set{W}}\bs{L}.
\end{equation*}

\item The sums marked with $(\beta)$ are taken over all 
\begin{multline*}
\bm{\tilde{w}}\bs{\set{L}} \in \set{T}_{\delta}(W\bs{1},W\bs{2},\ldots,W\bs{L}|\bm{w}\bs{1},\bm{w}\bs{\set{L}^c})  \\
\text{with}\quad  \phi(\bm{\tilde{w}}\bs{l}) \neq \phi(\bm{w}\bs{l})\quad \text{for all}\quad l \in \set{L}.
\end{multline*}

\item  the sums marked with $(\gamma)$ are take over all 
\begin{equation*}
(\bm{w}\bs{1},\bm{w}\bs{2},\ldots,\bm{w}\bs{L}) \in \set{T}_{\delta'}(W\bs{1},W\bs{2},\ldots,W\bs{L});
\end{equation*}

\item Equality (a) follows because the probability that each sequence $\bm{\tilde{\bm{w}}}\bs{l}$ in $\bm{\set{W}}\bs{l}$ is randomly assigned to the same bin as $\bm{w}\bs{l}$ is independent of all other bin assignments and equal to $2^{-n r\bs{l}}$.

\item Inequality (b) bounds the cardinality of of the conditionally typical set $\set{T}_{\delta}(W\bs{1},W\bs{2},\ldots,W\bs{L}|\bm{w}\bs{1},\bm{w}\bs{\set{L}^c})$ using Lemma~\ref{Lem:CondtionallyTypical}.

\end{itemize}
Finally, we notice that if
\begin{equation}\label{Eqn:Node1UplinkRateConstraint}
H(W\bs{\set{L}}|W\bs{1},W\bs{\set{L}^c}) (1 + \delta) < \sum_{l\in\set{L}} r\bs{l}
\end{equation}
then 
\begin{equation*}
\lim_{n \rightarrow \infty} \mathbb{P}[\set{E}_{2,\set{L}}] = 0.
\end{equation*}

\begin{figure*}
\begin{align}
\notag
\mathbb{P}[\set{E}_4] &\step{a}{=}
\sum_{(\alpha)} \bm{q}_\text{s}(\bm{w}\bs{1},\bm{w}\bs{2},\ldots,\bm{w}\bs{L})
\indicator{(\bm{w}\bs{1},\bm{w}\bs{2},\ldots,\bm{w}\bs{L}) \in \set{T}_{\delta'}(W\bs{1},W\bs{2},\ldots,W\bs{L})}\\
\notag
&\hspace{10mm}
\cdot
\sum_{\bm{y}\bs{1} \in \bm{\set{Y}}\bs{1}} \mathbb{P}\big[\bm{Y}\bs{1}=\bm{y}\bs{1}\big|(\bm{W}\bs{1},\bm{W}\bs{2},\ldots,\bm{W}\bs{L}) = (\bm{w}\bs{1},\bm{w}\bs{2},\ldots,\bm{w}\bs{L})\big]\\
\notag
&\hspace{40mm}  \cdot \indicator{\big(\bm{X}\bs{0}(b\bs{1},\ldots,b\bs{L}),\bm{y}\bs{1}\big) \in \set{T}_{\delta'}(X\bs{0},Y\bs{1})}\\
\notag
&\hspace{59mm}
\cdot \sum_{(\beta)} \mathbb{P}\big[\bm{X}\bs{0}(b\bs{1},\tilde{b}\bs{2},\ldots,\tilde{b}\bs{L}) \in \set{T}_\delta(X\bs{0},Y\bs{1}|\bm{y}\bs{1})\big]\\
\notag
&\step{b}{\leq}
\sum_{(\gamma)} \bm{q}_\text{s}(\bm{w}\bs{1},\bm{w}\bs{2},\ldots,\bm{w}\bs{L})\\
\notag
&\hspace{10mm}\cdot \sum_{\bm{y}\bs{1} \in \set{T}_{\delta'}(Y\bs{1})} \mathbb{P}\big[\bm{Y}\bs{1}=\bm{y}\bs{1}\big|(\bm{W}\bs{1},\bm{W}\bs{2},\ldots,\bm{W}\bs{L}) = (\bm{w}\bs{1},\bm{w}\bs{2},\ldots,\bm{w}\bs{L})\big] \\
\notag
&\hspace{59mm}
\cdot \sum_{(\beta)} \mathbb{P}\big[\bm{X}\bs{0}(b\bs{1},\tilde{b}\bs{2},\ldots,\tilde{b}\bs{L}) \in \set{T}_\delta(X\bs{0},Y\bs{1}|\bm{y}\bs{1})\big]\\
\notag
&\step{c}{\leq}
\sum_{(\gamma)} \bm{q}_\text{s}(\bm{w}\bs{1},\bm{w}\bs{2},\ldots,\bm{w}\bs{L})\\
\notag
&\hspace{10mm}
\sum_{\bm{y}\bs{1} \in \set{T}_{\delta'}(Y\bs{1})} \mathbb{P}\big[\bm{Y}\bs{1} = \bm{y}\bs{1} \big| (\bm{W}\bs{1},\bm{W}\bs{2},\ldots,\bm{W}\bs{L})=(\bm{w}\bs{1},\bm{w}\bs{2},\ldots,\bm{w}\bs{L})\big]
 \\
\notag
&\hspace{80mm} \cdot |\set{Q}\bs{1}(\bm{w}\bs{1})|\ 2^{-n(I(X\bs{0};Y\bs{1})-2\delta H(X\bs{0}))}\\
\notag
\\
\notag
&\step{d}{\leq}
\sum_{(\gamma)} \bm{q}_\text{s}(\bm{w}\bs{1},\bm{w}\bs{2},\ldots,\bm{w}\bs{L})\\
\notag
&\hspace{10mm}
\sum_{\bm{y}\bs{1} \in \set{T}_{\delta'}(Y\bs{1})}
\mathbb{P}\big[\bm{Y}\bs{1}=\bm{y}\bs{1}\big|(\bm{W}\bs{1},\bm{W}\bs{2},\ldots,\bm{W}\bs{L})
= (\bm{w}\bs{1},\bm{w}\bs{2},\ldots,\bm{w}\bs{L})\big]\\
\notag
&\hspace{50mm} \cdot 2^{n H(W\bs{2},W\bs{3},\ldots,W\bs{L}|W\bs{1})(1 + \delta)}\ 2^{-n(I(X\bs{0};Y\bs{1})-2\delta H(X\bs{0}))}\\
\label{Eqn:BoundE4l}
&\leq 2^{n H(W\bs{2},W\bs{3},\ldots,W\bs{L}|W\bs{1})(1 + \delta)}\ 2^{-n(I(X\bs{0};Y\bs{1})-2\delta H(X\bs{0}))},
\end{align}
\end{figure*}
The probability of $\set{E}_4$ can be upper bound as shown in~\eqref{Eqn:BoundE4l}.
\begin{itemize}

\item The sum marked with $(\alpha)$ is taken over all 
\begin{equation*}
(\bm{w}\bs{1},\bm{w}\bs{2},\ldots,\bm{w}\bs{L}) \in \bm{\set{W}}\bs{1} \times \bm{\set{W}}\bs{2} \times \cdots \times \bm{\set{W}}\bs{L}.
\end{equation*}

\item The sums marked with $(\beta)$ are taken over all 
\begin{equation*}
(\tilde{b}\bs{2},\tilde{b}\bs{3},\ldots,\tilde{b}\bs{L}) \in \set{Q}\bs{1}(\bm{w}\bs{1}).
\end{equation*}

\item The sums marked with $(\gamma)$ are taken over all 
\begin{equation*}
(\bm{w}\bs{1},\bm{w}\bs{2},\ldots,\bm{w}\bs{L})  \in \set{T}_\delta(W\bs{1},W\bs{2},\ldots,W\bs{L}).
\end{equation*}

\end{itemize}
The reasoning behind (in)equalities (a) through (d) is as follows.
\begin{itemize}
\item[(a)] We let 
\begin{equation*}
b\bs{l} = \phi\bs{l}(\bm{w}\bs{l}),
\end{equation*}
for $l = 1,2,\ldots,L$, denote the bin index of the $l$-th source sequence.
\item[(b)] We have that 
\begin{equation*}
\big(\bm{X}\bs{0}(b\bs{1},\ldots,b\bs{L}),\bm{y}\bs{1}\big) \in \set{T}_{\delta'}(X\bs{0},Y\bs{1})
\end{equation*}
implies 
\begin{equation*}
\bm{y}\bs{1} \in \set{T}_{\delta'}(Y\bs{1}).
\end{equation*}
\item[(c)] We bound the probability that an alternate codeword, $\bm{X}\bs{0}(b\bs{1},\tilde{b}\bs{2},\tilde{b}\bs{2},\ldots,\tilde{b}\bs{L})$, is $\delta$-jointly typical with $\bm{y}\bs{1}$, using Lemma~\ref{Lem:CondtionallyTypical} and the fact that the $n$-symbols of the codeword are drawn i.i.d. with the marginal of $X\bs{0}$.
\item[(d)] The cardinality of $\set{Q}\bs{1}(\bm{w}\bs{1})$ must be smaller than the cardinality of the conditionally typical set $\set{T}_{\delta}(W\bs{1},W\bs{2},$ $\ldots,W\bs{L}|\bm{w}\bs{1})$, which in turn is smaller than the bound of Lemma~\ref{Lem:CondtionallyTypical}.
\end{itemize}
Finally, it follows that 
\begin{equation*}
\lim_{n \rightarrow \infty} \mathbb{P}[\set{E}_4] = 0
\end{equation*}
whenever
\begin{multline}\label{Eqn:Node1DownlinkRateConstraint}
H(W\bs{2},W\bs{3},\ldots,W\bs{L}|W\bs{1}) < I(X\bs{0};Y\bs{1}) \\
- \delta\big(H(W\bs{2},W\bs{3},\ldots,W\bs{L}|W\bs{1}) + 2 H(X\bs{0})\big).
\end{multline}

The above analysis can be repeated for each of the $L$ nodes to obtain constraints analogous to~\eqref{Eqn:Node1UplinkRateConstraint} and~\eqref{Eqn:Node1DownlinkRateConstraint}. We are free to choose $\delta'$ and $\delta$ arbitrarily small, so $P_\text{e}$ (averaged over the ensemble of codes) can be made arbitrarily small by increasing $n$ if
\begin{equation}\label{Eqn:UplinkRateConstraint}
H(W\bs{\set{L}}|W\bs{\set{L}^c}) < \sum_{l\in\set{L}} r\bs{l},
\end{equation}
holds for each nonempty and strict subset $\set{L}$ of $\{1,2,\ldots,L\}$ and
\begin{equation}\label{Eqn:DownlinkRateConstraint}
H(W\bs{2},W\bs{3},\ldots,W\bs{L}|W\bs{l}) < I(X\bs{0};Y\bs{l}),
\end{equation}
holds for each $l = 1,2,\ldots,L$. The random coding argument implies that there must exist at least one code from the ensemble with an error probability at least as small as $P_\text{e}$.

\subsubsection{Extension to Noisy Uplink Channels}

Consider the setup of Theorem~\ref{Thm:JSCC} with noisy (orthogonal) uplink channels. The random-coding argument of Sections~\ref{Sec:RandomCoding-Uplink} to~\ref{Sec:RandomCoding-Decoding} can be extended to this setting by the use of a separate source and channel code architecture on the uplink. That is, use a (point-to-point) capacity-approaching code for each orthogonal uplink channel, choose the source-code compression rates to match the channel coding rates ($r\bs{l} = \Cu{l}-\zeta$, where $\zeta$ is sufficiently small), and communicate the $L$ bin indices to the relay using the point-to-point channel codes. \hfill $\blacksquare$

\subsection{Proof of Theorem~\ref{Thm:JSCC} (Converse)}
Suppose that we have a JSC-code with $P_\text{e} \leq \epsilon$. The next two lemmas will be useful.

\medskip

\begin{lemma}\label{Lem:Converse-Fano}
For each nonempty and strict subset $\set{L}$ of $\{1,2,\ldots,L\}$, we have
\begin{align*}
\frac{1}{n} H(\bm{W}\bs{\set{L}}|\bm{W}\bs{\set{L}^c},\bm{Y}\bs{\set{L}^c})
&\leq \varepsilon(n,\epsilon),
\end{align*}
where $\varepsilon(n,\epsilon) \rightarrow 0$ as $\epsilon \rightarrow 0$
\end{lemma}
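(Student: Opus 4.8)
The plan is to apply Fano's inequality to the decoder at each node in $\set{L}^c$ and then combine these bounds. Fix a nonempty strict subset $\set{L}$ of $\{1,2,\ldots,L\}$; then $\set{L}^c$ is nonempty. For each $l \in \set{L}^c$, the decoder at Node~$l$ produces $(\bm{\hat{W}}\bs{l,1},\ldots,\bm{\hat{W}}\bs{l,L})$ as a function of $(\bm{W}\bs{l},\bm{Y}\bs{l})$, and $P_\text{e} \leq \epsilon$ forces $\mathbb{P}[(\bm{\hat{W}}\bs{l,1},\ldots,\bm{\hat{W}}\bs{l,L}) \neq (\bm{W}\bs{1},\ldots,\bm{W}\bs{L})] \leq \epsilon$. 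In particular the sub-tuple $\bm{W}\bs{\set{L}}$ is recovered from $(\bm{W}\bs{l},\bm{Y}\bs{l})$ with error probability at most $\epsilon$, so Fano's inequality gives $H(\bm{W}\bs{\set{L}} \mid \bm{W}\bs{l},\bm{Y}\bs{l}) \leq 1 + \epsilon \log_2 |\bm{\set{W}}\bs{\set{L}}| = 1 + n\epsilon \sum_{k \in \set{L}} \log_2 |\set{W}\bs{k}|$. Dividing by $n$ yields a bound of the desired form $\varepsilon(n,\epsilon)$ with a single conditioning variable $(\bm{W}\bs{l},\bm{Y}\bs{l})$.

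The next step is to upgrade the conditioning from a single $l \in \set{L}^c$ to the full tuple $(\bm{W}\bs{\set{L}^c},\bm{Y}\bs{\set{L}^c})$. Pick any $l \in \set{L}^c$. Since conditioning reduces entropy,
\begin{equation*}
H(\bm{W}\bs{\set{L}} \mid \bm{W}\bs{\set{L}^c},\bm{Y}\bs{\set{L}^c}) \leq H(\bm{W}\bs{\set{L}} \mid \bm{W}\bs{l},\bm{Y}\bs{l}) \leq 1 + n\epsilon \sum_{k \in \set{L}} \log_2 |\set{W}\bs{k}|,
\end{equation*}
because $(\bm{W}\bs{l},\bm{Y}\bs{l})$ is a sub-collection of $(\bm{W}\bs{\set{L}^c},\bm{Y}\bs{\set{L}^c})$. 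Dividing by $n$ gives
\begin{equation*}
\frac{1}{n} H(\bm{W}\bs{\set{L}} \mid \bm{W}\bs{\set{L}^c},\bm{Y}\bs{\set{L}^c}) \leq \frac{1}{n} + \epsilon \sum_{k \in \set{L}} \log_2 |\set{W}\bs{k}| \triangleq \varepsilon(n,\epsilon),
\end{equation*}
and since the alphabets are fixed and finite, $\varepsilon(n,\epsilon) \to 0$ as $\epsilon \to 0$ (uniformly once $n$ is large, and in any case the $1/n$ term also vanishes as $n \to \infty$). This completes the argument.

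There is no real obstacle here; the only point requiring a little care is bookkeeping, namely to notice that one should condition on the decoder's own observation $(\bm{W}\bs{l},\bm{Y}\bs{l})$ for a suitably chosen $l$ — which must lie in $\set{L}^c$ so that $\bm{W}\bs{l}$ is legitimately part of the conditioning tuple $\bm{W}\bs{\set{L}^c}$ rather than part of the message $\bm{W}\bs{\set{L}}$ being estimated — and then discard (weaken) the conditioning down to a single index to invoke Fano, or equivalently keep the full tuple and invoke "conditioning reduces entropy" as above. The strictness of $\set{L}$ is exactly what guarantees $\set{L}^c \neq \emptyset$, so such an $l$ exists. The finiteness of the source alphabets is what makes the Fano penalty term $o(1)$ in $\epsilon$.
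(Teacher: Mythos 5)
Your proposal is correct and takes essentially the same route as the paper: fix $l \in \set{L}^c$, apply Fano's inequality to that node's decoder (the paper first splits $H(\bm{W}\bs{\set{L}}|\bm{W}\bs{l},\bm{Y}\bs{l})$ over the components $k \in \set{L}$ and applies Fano per source, whereas you apply it once to the whole tuple $\bm{W}\bs{\set{L}}$ --- an immaterial difference), and then use that conditioning on the full tuple $(\bm{W}\bs{\set{L}^c},\bm{Y}\bs{\set{L}^c})$ only reduces entropy. One small polish: invoke Fano in the form $H \leq h(\epsilon) + \epsilon\, n \sum_{k\in\set{L}} \log_2 |\set{W}\bs{k}|$ rather than $1 + \epsilon\, n \sum_{k\in\set{L}} \log_2 |\set{W}\bs{k}|$, so that $\varepsilon(n,\epsilon) = h(\epsilon)/n + \epsilon \sum_{k\in\set{L}} \log_2 |\set{W}\bs{k}|$ vanishes as $\epsilon \rightarrow 0$ for fixed $n$, exactly as the lemma states; with the ``$+1$'' version your bound only vanishes in the joint limit $\epsilon \rightarrow 0$, $n \rightarrow \infty$, which suffices for the converse but is slightly weaker than the literal statement.
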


\medskip

\begin{proof}
Choose $l$ from $\set{L}^c$ arbitrarily. We have
\begin{align*}
\frac{1}{n} H(\bm{W}\bs{\set{L}}|\bm{W}\bs{\set{L}^c},\bm{Y}\bs{\set{L}^c})
&\leq \frac{1}{n} \sum_{l \in \set{L}} H(W\bs{l} | \bm{W}\bs{j},\bm{Y}\bs{j})\\
&\step{a}{\leq} \frac{1}{n} \sum_{l \in \set{L}} \Big[ h(\epsilon) + \epsilon n \log |\set{W}\bs{l}| \Big]\\
&\leq \frac{L\ h(\epsilon)}{n}   + \epsilon L \max_{l' \in \{1,\ldots,L\}} |\set{W}\bs{l'}|,
\end{align*}
where $h(\epsilon)$ is the binary entropy function and (a) follows from Fano's inequality.
\end{proof}

\medskip

\begin{lemma}\label{Lem:Converse-Orthogonal-MAC}
For each nonempty and strict subset $\set{L}$ of $\{1,2,$ $\ldots,L\}$, we have
\begin{equation*}
\sum_{l \in \set{L}} I(\bm{X}\bs{l} ; \bm{Y}\bs{0,l}) \geq I(\bm{X}\bs{\set{L}} ; \bm{Y}\bs{0,\set{L}}).
\end{equation*}
\end{lemma}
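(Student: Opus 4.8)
The plan is to expand the right-hand side with the chain rule for entropy and then exploit the defining feature of the orthogonal uplink: conditioned on the inputs $\bm{X}\bs{\set{L}} = (\bm{X}\bs{l})_{l\in\set{L}}$, the outputs $(\bm{Y}\bs{0,l})_{l\in\set{L}}$ are mutually independent and each $\bm{Y}\bs{0,l}$ depends only on $\bm{X}\bs{l}$. Concretely, because the $l$-th uplink is a DM channel with transition law $q\bs{l}$ and the channel uses are independent, the conditional law factorises as
\begin{equation*}
\mathbb{P}\big[\bm{Y}\bs{0,\set{L}} = \bm{y}\bs{0,\set{L}} \,\big|\, \bm{X}\bs{\set{L}} = \bm{x}\bs{\set{L}}\big] = \prod_{l\in\set{L}} \prod_{i=1}^n q\bs{l}\big(y\bs{0,l}_i \,\big|\, x\bs{l}_i\big).
\end{equation*}

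First I would write $I(\bm{X}\bs{\set{L}};\bm{Y}\bs{0,\set{L}}) = H(\bm{Y}\bs{0,\set{L}}) - H(\bm{Y}\bs{0,\set{L}}\mid\bm{X}\bs{\set{L}})$. For the first term, subadditivity of entropy gives $H(\bm{Y}\bs{0,\set{L}}) \leq \sum_{l\in\set{L}} H(\bm{Y}\bs{0,l})$, requiring no assumption on the joint distribution of the inputs. For the second term, the factorisation above yields $H(\bm{Y}\bs{0,\set{L}}\mid\bm{X}\bs{\set{L}}) = \sum_{l\in\set{L}} H(\bm{Y}\bs{0,l}\mid\bm{X}\bs{\set{L}})$ by conditional independence, and then $H(\bm{Y}\bs{0,l}\mid\bm{X}\bs{\set{L}}) = H(\bm{Y}\bs{0,l}\mid\bm{X}\bs{l})$ since $\bm{Y}\bs{0,l}$ is conditionally independent of $\bm{X}\bs{\set{L}\setminus\{l\}}$ given $\bm{X}\bs{l}$.

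Combining the two bounds, $I(\bm{X}\bs{\set{L}};\bm{Y}\bs{0,\set{L}}) \leq \sum_{l\in\set{L}}\big[H(\bm{Y}\bs{0,l}) - H(\bm{Y}\bs{0,l}\mid\bm{X}\bs{l})\big] = \sum_{l\in\set{L}} I(\bm{X}\bs{l};\bm{Y}\bs{0,l})$, which is the claim. The proof is essentially mechanical; the only point needing a word of care is to justify the conditional factorisation directly from the problem's definition of the orthogonal uplink, and to note explicitly that — unlike the usual orthogonal-MAC converse — no independence of the codeword inputs $(\bm{X}\bs{l})_{l\in\set{L}}$ is invoked, so the inequality holds for the (generally dependent) inputs produced by the JSC encoders $\phi\bs{l}$.
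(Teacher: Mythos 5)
Your proof is correct and follows essentially the same route as the paper: both expand the mutual informations as entropy differences, use subadditivity for $H(\bm{Y}\bs{0,\set{L}})$, and use the orthogonal-channel conditional independence (which the paper phrases as a Markov chain) to handle the conditional entropy term. The only cosmetic difference is that you state the conditional-entropy step as an exact factorisation, while the paper records it as an inequality; your closing remark that no independence of the inputs is needed is also consistent with how the lemma is used in the converse.
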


\medskip

\begin{proof}
We have
\begin{align*}
\sum_{l \in \set{L}}  I(\bm{X}\bs{l} ; \bm{Y}\bs{0,l})
&= \sum_{l \in \set{L}} \Big[H(\bm{Y}\bs{0,l}) - H(\bm{Y}\bs{0,l}|\bm{X}\bs{l}) \Big]\\
&\geq H(\bm{Y}\bs{0,\set{L}}) - \sum_{l \in \set{L}} H(\bm{Y}\bs{0,l} | \bm{X}\bs{l})\\
&\step{a}{\geq} H(\bm{Y}\bs{0,\set{L}}) - H(\bm{Y}\bs{0,\set{L}}|\bm{X}\bs{\set{L}})\\
&= I(\bm{X}\bs{\set{L}} ; \bm{Y}\bs{0,\set{L}}),
\end{align*}
where (a) is a consequence of the Markov chain
\begin{equation*}
\bm{Y}\bs{0,l} \markov\ \bm{X}\bs{l} \markov\ \bm{Y}\bs{\set{L}\backslash l}.
\end{equation*}
\end{proof}

\medskip

\begin{proof}[Proof of Theorem~\ref{Thm:JSCC} (Converse)]
For each nonempty and strict subset $\set{L}$ of $\{1,2,\ldots,L\}$, we can lower bound the right hand side of~\eqref{Eqn:JSCCUp} by
\begin{align}
\notag
\sum_{l \in \set{L}} \Cu{l}
\notag
&\stackrel{(a)}{\geq} \sum_{l \in \set{L}} \frac{1}{n} I(\bm{X}\bs{l};\bm{Y}\bs{0,l})\\
\notag
&\stackrel{(b)}{\geq} \frac{1}{n} I(\bm{X}\bs{\set{L}};\bm{Y}\bs{0,\set{L}})\\
\notag
&\stackrel{(c)}{\geq} \frac{1}{n} I(\bm{W}\bs{\set{L}},\bm{W}\bs{\set{L}^c} ; \bm{Y}\bs{0,\set{L}})\\
\notag
&\geq \frac{1}{n} I(\bm{W}\bs{\set{L}} ; \bm{Y}\bs{0,\set{L}} | \bm{W}\bs{\set{L}^c})\\
\notag
&\stackrel{(d)}{\geq} \frac{1}{n} I(\bm{W}\bs{\set{L}} ; \bm{Y}\bs{\set{L}^c} | \bm{W}\bs{\set{L}^c})\\
\notag
&\geq \frac{1}{n} \Big[ H(\bm{W}\bs{\set{L}} | \bm{W}\bs{\set{L}^c}) - H( \bm{W}\bs{\set{L}} | \bm{Y}\bs{\set{L}^c},\bm{W}\bs{\set{L}^c}) \Big]\\
\label{Eqn:Converse1}
&\stackrel{(e)}{\geq} H(W\bs{\set{L}} | W\bs{\set{L}^c}) - \varepsilon(n, \epsilon),
\end{align}
where (a) follows from the definition of channel capacity; (b) follows from Lemma~\ref{Lem:Converse-Orthogonal-MAC}; (c) follows from the Markov chain $(\bm{W}\bs{\set{L}},\bm{W}\bs{\set{L}^c}) \markov \bm{X}\bs{\set{L}} \markov \bm{Y}\bs{0,\set{L}}$; (d) follows from the Markov chain $\bm{W}\bs{\set{L}} \markov (\bm{W}\bs{\set{L}^c},\bm{Y}\bs{0,\set{L}}) \markov \bm{Y}\bs{\set{L}^c}$; and (e) follows from Lemma~\ref{Lem:Converse-Fano} and that the source is i.i.d.

Consider~\eqref{Eqn:JSCCDown}. Let $p_{X_i\bs{0}}$ denote the pmf of the $i$-th symbol, $X\bs{0}_i$, of $\bm{X}\bs{0}$. Define a new time-averaged random variable $\tilde{X}\bs{0}$ on $\set{X}\bs{0}$ via the pmf
\begin{equation*}
p_{\tilde{X}\bs{0}}(x) \triangleq \frac{1}{n} \sum_{i = 1}^{n} p_{X\bs{0}_i}(x), \quad x \in \set{X}\bs{0}.
\end{equation*}
For each $l$ in $\{1,2,\ldots,L\}$, we have
\begin{align}
\notag
I(\tilde{X}\bs{0} ; Y\bs{l})
&\stackrel{(a)}{\geq} \frac{1}{n} \sum_{i=1}^{n} I(X\bs{0}_i ; Y\bs{l}_i)\\
\notag
&\stackrel{(b)}{\geq} \frac{1}{n} I(\bm{X}\bs{0} ; \bm{Y}\bs{l})\\
\notag
&\stackrel{(c)}{\geq} \frac{1}{n} I(\bm{W}\bs{l},\bm{W}\bs{\{l\}^c} ; \bm{Y}\bs{l})\\
\notag
&\geq \frac{1}{n} I(\bm{W}\bs{\{l\}^c} ; \bm{Y}\bs{l}|\bm{W}\bs{l})\\
\label{Eqn:Converse2}
&\stackrel{(d)}{\geq} H(W\bs{\{l\}^c} | W\bs{l}) -  \varepsilon(n , \epsilon),
\end{align}
where (a) follows from Jensen's inequality and the fact that $I(\tilde{X}\bs{0};Y\bs{l})$ is concave in $p_{\tilde{X}\bs{0}}$; (b) follows because the broadcast channel is stationary and memoryless; (c) follows from the Markov chain
$(\bm{W}\bs{l},\bm{W}\bs{\{l\}^c}) \markov \bm{X}\bs{0} \markov \bm{Y}\bs{l}$; and (d) follows from Lemma~\ref{Lem:Converse-Fano}.

Consider a sequence $\{\epsilon\} \rightarrow 0$. For each $\epsilon$ in this sequence, we have by definition at JSC-code for which~\eqref{Eqn:Converse1} and~\eqref{Eqn:Converse2} hold for some $X\bs{0}$ on $\set{X}\bs{0}$. The proof is completed by noting that the sequence of pmfs $p_{\tilde{X}\bs{0}_i}$ will converge to some pmf on $\set{X}\bs{0}$.
\end{proof}

\section{Counterexample}\label{App:Example}

We now describe a situation where the achievability assertion of Theorem~\ref{Thm:JSCC} holds, but that of Theorem~\ref{Thm:SSCC} fails; equivalently, reliable communication is achievable with JSC-codes of the form~\eqref{Eqn:DefJSCC}, and it is not achievable with separate source and channel codes of the form~\eqref{Eqn:ChannelCode} and~\eqref{Eqn:SourceCodeDefinition}.

\medskip

\begin{example}
Suppose we have $3$-nodes with the following sources:
\begin{align}
W\bs{1} &= U_{1} \oplus U_{12} \oplus U_{13} \\
W\bs{2} &= U_{2} \oplus U_{12} \oplus U_{23} \\
W\bs{3} &= U_{3} \oplus U_{13} \oplus U_{23},
\end{align}
where each $U_{i} \in \{0,1\}$ is an independent random variable, for all $i \in \{1,2,3,12,13,23\}$, and $\oplus$ denotes the XOR function.  We choose $\Pr\{ U_{1} = 1\}=0.0085$, $\Pr\{ U_{2} = 1\}=\Pr\{ U_{3} = 1\}=0.0052$, $\Pr\{ U_{12} = 1\} = \Pr\{ U_{13} = 1\} = 0.0128$, and $\Pr\{ U_{23} = 1\}=0.138$. For these choices of probability mass functions, we have 
\begin{subequations}\label{Eqn:CounterexampleSourceEntropies}
\begin{align}
\notag
H(W\bs{1}|W\bs{2},W\bs{3}) &= H(W\bs{2}|W\bs{1},W\bs{3})\\
\notag
& = H(W\bs{3}|W\bs{1},W\bs{2})\\
& = 0.10\\
\notag
H(W\bs{1},W\bs{2}|W\bs{3}) &= H(W\bs{1},W\bs{3}|W\bs{2}) \\
&= 0.30\\
H(W\bs{2},W\bs{3}|W\bs{1}) &= 0.70 \\
H(W\bs{1}) &= 0.21.
\end{align}
\end{subequations}
The {\em information diagram}~\cite[Sec.\ 3.5]{Yeung-2008-B} of such a source is depicted in Fig.~\ref{fig}. 

Suppose the uplink channels are point-to-point channels with capacities $\Cu{1}$ $=$ $\Cu{2}$ $=$ $\Cu{3}$ $=$ $2$, and the downlink channels are $Y\bs{i} = X\bs{0} \oplus N\bs{i}$, for each $i \in \{1,2,3\}$, where each $X\bs{i} \in \{0,1\}$, and each $N\bs{i}$ is an independent random variable. We choose $\Pr\{N\bs{1}=1\} = 0.0508$ and $\Pr\{N\bs{2}=1\} = \Pr\{ N\bs{3} = 1 \} = 0.184$. Note that the uniform input distribution simultaneously maximises $I(X\bs{0};Y\bs{i})$ for all $i \in \{1,2,3\}$, giving $I(X\bs{0};Y\bs{1}) = 0.71$ and $I(X\bs{0};Y\bs{2}) = I(X\bs{0};Y\bs{3}) = 0.31$. These assumptions satisfy the achievability requirements of Theorem~\ref{Thm:JSCC}.

We now show, via a contradiction, that there \emph{does not} exists a rate tuple $(r\bs{1},r\bs{2},r\bs{3})$ satisfying~\eqref{Eqn:Separation}. Suppose there exists $(r\bs{1},r\bs{2},r\bs{3})$ such that
\begin{subequations}
\begin{align}
0.10 =  H(W\bs{1}|W\bs{2},W\bs{3}) <\ & r\bs{1} < \Cu{1} = 2 \label{eq:1}\\
0.10 =  H(W\bs{2}|W\bs{1},W\bs{3}) <\ & r\bs{2} < \Cu{2} = 2 \\
0.10 =  H(W\bs{3}|W\bs{1},W\bs{2}) <\ & r\bs{3} < \Cu{3} = 2
\end{align}

\begin{align}
%
0.30 = H(W\bs{1},W\bs{2}|W\bs{3}) <\  r\bs{1} + r\bs{2} 
&< \Cu{1} + \Cu{2} = 4 \\
%
0.70 = H(W\bs{2},W\bs{3}|W\bs{1}) <\ r\bs{2} + r\bs{3} 
&< \Cu{2} + \Cu{3} = 4 \label{eq:2}\\
%
0.30 = H(W\bs{1},W\bs{3}|W\bs{2}) <\ r\bs{1} + r\bs{3}
& < \Cu{1} + \Cu{3} = 4
\end{align}
and
\begin{align}
&r\bs{1} + r\bs{2} < I(X\bs{0};Y\bs{3}) = 0.31 \label{eq:3}\\
&r\bs{2} + r\bs{3} < I(X\bs{0};Y\bs{1}) = 0.71\\
& r\bs{1} + r\bs{3} < I(X\bs{0};Y\bs{2}) = 0.31. \label{eq:4}
\end{align}
\end{subequations}
The left inequalities in \eqref{eq:1} and \eqref{eq:2} imply that
\begin{equation*}
\max \{ r\bs{1} + r\bs{2} , r\bs{1} + r\bs{3} \} > 0.45.
\end{equation*}
However, \eqref{eq:3} and~\eqref{eq:4} together imply
\begin{equation*}
\max \{ r\bs{1} + r\bs{2}, r\bs{1} + r\bs{3} \} < 0.31,
\end{equation*}
which gives the desired contradiction.
\end{example}

\begin{figure}[t]
\centering
\includegraphics[width=4cm]{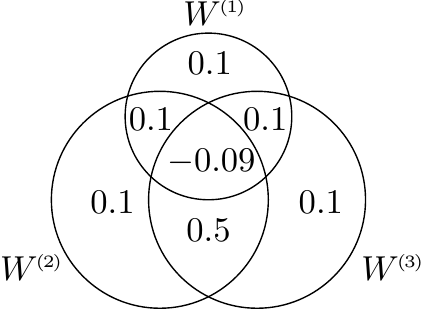}
\caption{The information diagram of $(W\bs{1},W\bs{2},W\bs{3})$ defined in~\eqref{Eqn:CounterexampleSourceEntropies}. Each of the three circles represents a source message, and the intersection between two circles represent the mutual information between the sources. The numbers are the values of the corresponding conditional entropies and mutual informations.}
\label{fig}
\end{figure}


\bibliographystyle{IEEEtran}



\begin{IEEEbiography}[{\includegraphics[width=1in,height=1.25in,clip,keepaspectratio,bb= 0 0 462 538]{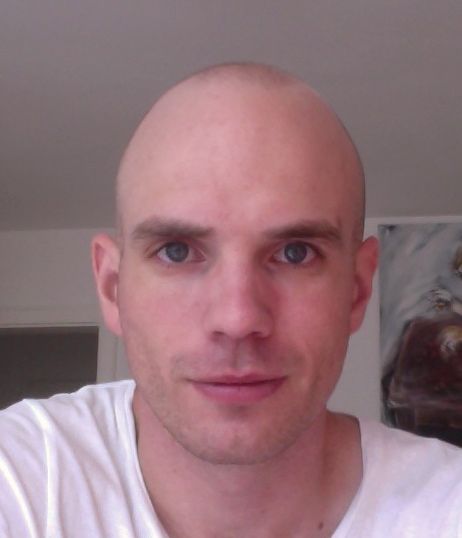}}]
{Roy Timo} (S'06-M'09) is a Research Fellow with the Institute for Telecommunications Research at the University of South Australia. Dr. Timo received the Bachelor of Engineering (Hons.) and Ph.D. degrees from The Australian National University in July 2005 and December 2009 respectively; he was a NICTA-enhanced Ph.D. candidate at NICTA's Canberra Research Laboratory. He held a Visiting Postdoctoral Research Associate position with the Department of Electrical Engineering at Princeton University in 2011 and 2012. He is a member of IEEE and the IEEE Information Theory Society.

Dr. Timo's research interests primarily lie within the fields of information theory and ergodic theory; in particular, the Shannon limits of source coding, channel coding and joint source-channel coding in networks.  
\end{IEEEbiography}

\begin{IEEEbiography}[{\includegraphics[width=1in,height=1.25in,clip,keepaspectratio,bb = 0 0 1153 1437]{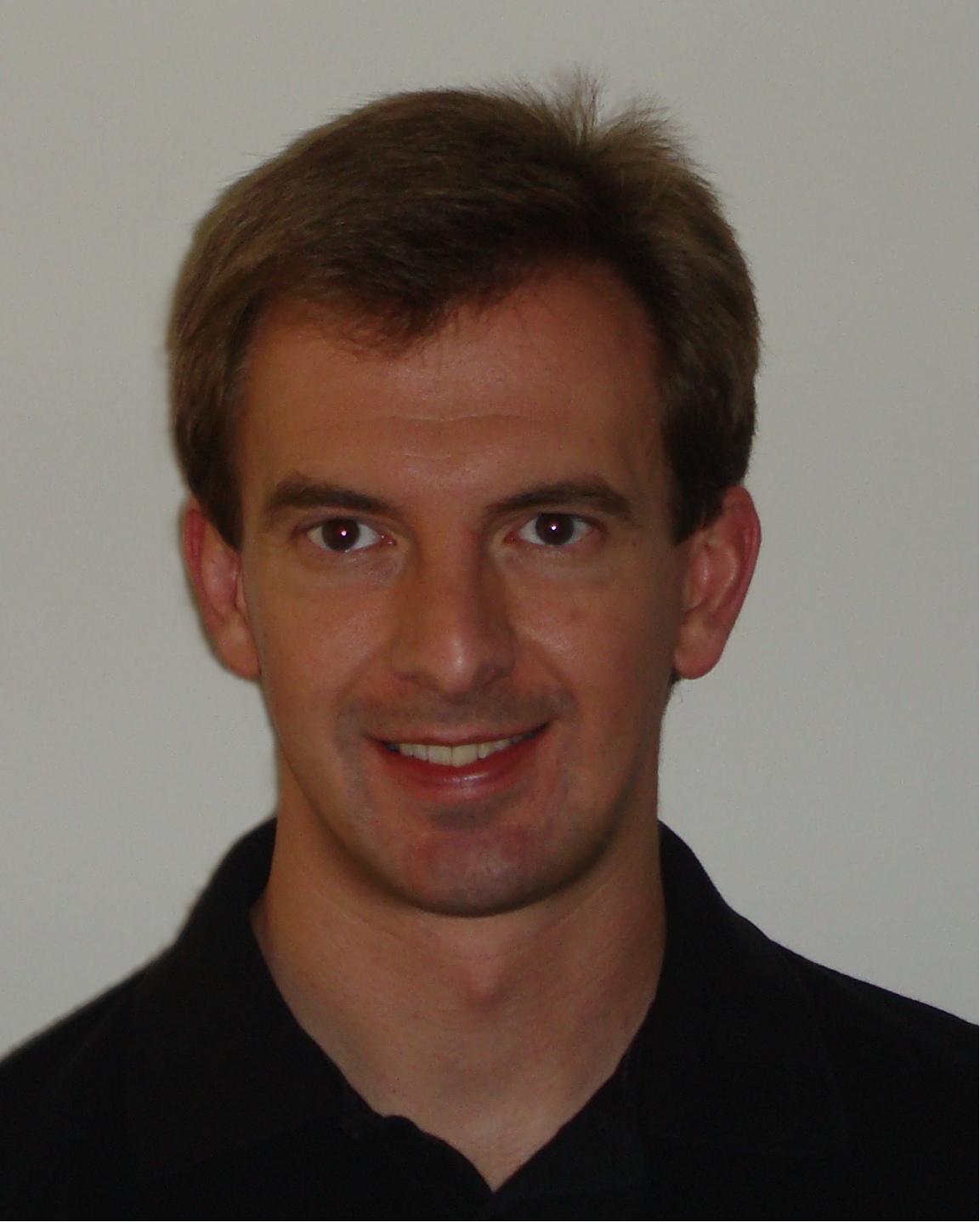}}]
{Gottfried Lechner} (S'03, M'08) was born on August 1, 1975 in Vienna, Austria. He received his Dipl.-Ing. and Dr.techn. degrees from the Vienna University of Technology (Vienna, Austria) in 2003 and 2007, respectively. From 2002 to 2008, he was a Researcher at the Telecommunications Research Centre Vienna (ftw) in the area of Signal- and Information Processing. Since 2008 he is a Research Fellow at the Institute for Telecommunications Research (ITR) at the University of South Australia (Adelaide, Australia). He is a Member of the IEEE and a member of the IEEE Information Theory and Communications Societies. His research interests include sparse graph codes, iterative techniques, source- and channel coding, cooperative communications, and optical communications.
\end{IEEEbiography}

\begin{IEEEbiography}[{\includegraphics[width=1in,height=1.25in,clip,keepaspectratio,bb = 0 0 580 725]{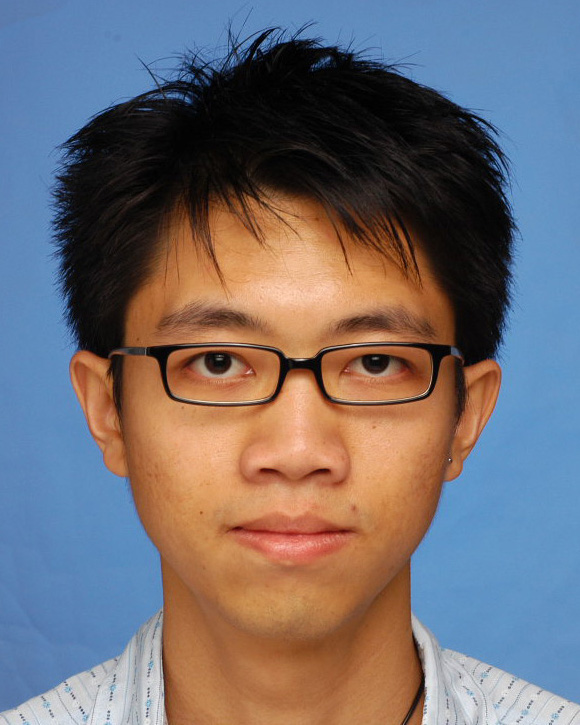}}]{Lawrence Ong} (S'05--M'10) received the BEng (1st Hons) degree in electrical engineering from the National University of Singapore (NUS), Singapore, in 2001. He subsequently received the MPhil degree from the University of Cambridge, UK, in 2004 and the PhD degree from NUS in 2008.

He was with MobileOne, Singapore, as a system engineer from 2001 to 2002. He was a research fellow at NUS, from 2007 to 2008. From 2008 to 2012, he was a postdoctoral researcher at The University of Newcastle, Australia. In 2012, he was awarded the Discovery Early Career Researcher Award (DECRA) by the Australian Research Council (ARC). He is currently a DECRA fellow at The University of Newcastle.
\end{IEEEbiography}

\begin{IEEEbiography}[{\includegraphics[width=1in,height=1.25in,clip,keepaspectratio,bb = 0 0 1022 1277]{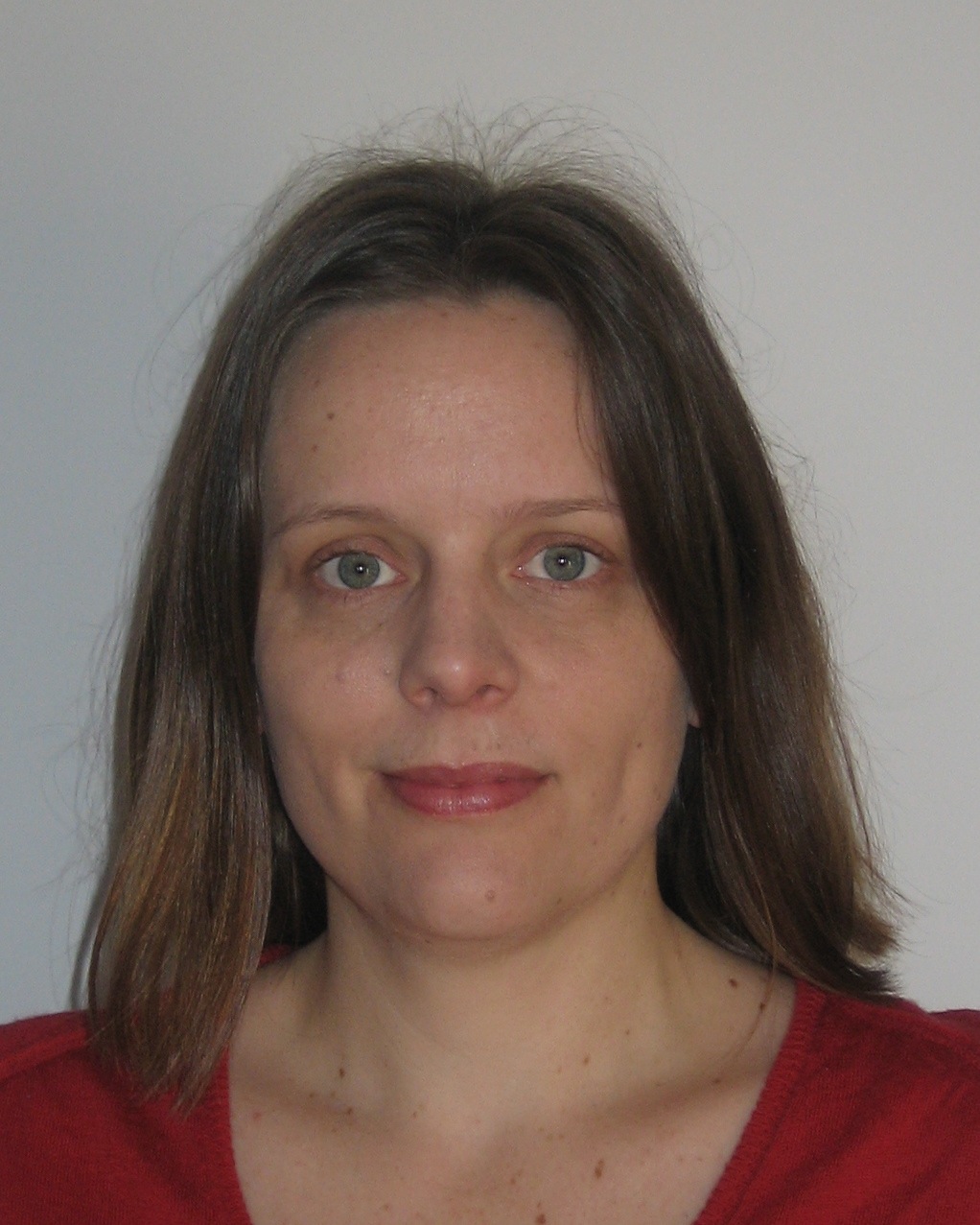}}]{Sarah Johnson} received the B.E. (Hons) degree in electrical
engineering in 2000, and PhD in 2004, both from the University of
Newcastle, Australia. She then held a postdoctoral position with the
Wireless Signal Processing Program, National ICT Australia before
returning to the University of Newcastle where she is now an Australian Research Council Future Fellow. Sarah's research interests are in the field of
error correction and information theory, and in particular the area
of codes for iterative decoding. She is the author of a book on
iterative error correction published by Cambridge University Press.
\end{IEEEbiography}

\end{document}